\newcommand{\term}[1]{\textit{#1}}
\newcommand{\brak}[1]{\ensuremath{[\![#1]\!]}}
\renewcommand{\paragraph}[1]{\medskip\noindent\textbf{{#1}}}
\numberwithin{equation}{section} 
\numberwithin{figure}{section} 
\newtheorem{thm}{Theorem}
  \newtheorem{rem}[thm]{Remark}
  \newtheorem{lem}[thm]{Lemma}
\newcommand{\indic}[1]{\mathbf{1}\left[#1\right]}
\newcommand{\nth}[2]{${#1}^{\mbox{\scriptsize {#2}}}$}
\DeclareMathOperator{\payoff}{payoff}
\DeclareMathOperator{\reward}{reward}
\DeclareMathOperator{\cost}{cost}
\DeclareMathOperator{\profit}{profit}
\DeclareMathOperator{\ROA}{ROA}
\DeclareMathOperator*{\argmin}{argmin}
\DeclareMathOperator*{\argmax}{argmax}
\DeclareMathOperator{\inc}{inc}
\newcommand{\instmark}[1]{$^{\mbox{\scriptsize #1}}$}
\begin{document}

\mainmatter  

\title{A Learning-Based Approach to Reactive Security}

\titlerunning{A Learning-Based Approach to Reactive Security}

%
%
\author{Adam Barth\instmark{1} \and Benjamin~I.~P. Rubinstein\instmark{1} \and Mukund Sundararajan\instmark{3} \and\\
John~C. Mitchell\instmark{4} \and Dawn Song\instmark{1} \and Peter~L. Bartlett\instmark{1,2}}

\authorrunning{Barth, Rubinstein, Sundararajan, Mitchell, Song, Bartlett}

\institute{\instmark{1}Computer Science Division \instmark{2}Department of Statistics, UC Berkeley\\
\instmark{3} Google Inc., Mountain View, CA\\
\instmark{4}Department of Computer Science, Stanford University}


%
%

\toctitle{Lecture Notes in Computer Science}
\tocauthor{Authors' Instructions}
\maketitle

\begin{abstract}
Despite the conventional wisdom that proactive security is superior to
reactive security, we show that reactive security can be competitive
with proactive security as long as the reactive defender learns from past
attacks instead of myopically overreacting to the last attack. Our
game-theoretic model follows common practice in the security literature
by making worst-case assumptions about the attacker: we grant the attacker
complete knowledge of the defender's strategy and do not require the
attacker to act rationally. In this model, we bound the competitive ratio
between a reactive defense algorithm (which is inspired by online learning
theory) and the best fixed proactive defense. Additionally, we show that,
unlike proactive defenses, this reactive strategy is robust to a lack of
information about the attacker's incentives and knowledge.
\end{abstract}

%

\section{Introduction}

Many enterprises employ a Chief Information Security Officer~(CISO) to
manage the enterprise's information security risks. Typically, an enterprise
has many more security vulnerabilities than it can realistically repair.
Instead of declaring the enterprise ``insecure'' until every last
vulnerability is plugged, CISOs typically perform a cost-benefit analysis to
identify which risks to address, but what constitutes an effective CISO
strategy?  The conventional wisdom~\cite{pironti-2005,kark-penn-dill-2008}
is that CISOs ought to adopt a ``forward-looking'' proactive approach to
mitigating security risk by examining the enterprise for vulnerabilities that
might be exploited in the future. Advocates of proactive security often
equate reactive security with myopic bug-chasing and consider it ineffective.
We establish sufficient conditions for when reacting \emph{strategically} to
attacks is as effective in discouraging attackers.
 
We study the efficacy of reactive strategies in an economic
model of the CISO's security cost-benefit trade-offs.
Unlike previously proposed economic models of security (see
Section~\ref{sec:related-work}), we do not assume the attacker acts
according to a fixed probability distribution. Instead, we consider
a game-theoretic model with a strategic attacker who responds to the
defender's strategy. As is standard in the security literature, we
make worst-case assumptions about the attacker.  For example, we grant
the attacker complete knowledge of the defender's strategy and do not
require the attacker to act rationally. Further, we make conservative
assumptions about the reactive defender's knowledge and do not assume
the defender knows all the vulnerabilities in the system or the
attacker's incentives. However, we do assume that the defender can
observe the attacker's past actions, for example via an intrusion
detection system or user metrics~\cite{TestPilot}.

In our model, we find that two properties are sufficient for a reactive
strategy to perform as well as the best proactive strategies.  First,
no single attack is catastrophic, meaning the defender can
survive a number of attacks.  This is consistent with situations
where intrusions (that, say, steal credit card numbers) are regrettable but
not business-ending. Second, the defender's budget is
\term{liquid}, meaning the defender can re-allocate resources without
penalty. For example, a CISO can reassign members of
the security team from managing firewall rules to improving database access
controls at relatively low switching costs. 

Because our model abstracts many vulnerabilities into a single graph edge,
we view the act of defense as increasing the attacker's \term{cost} for
mounting an attack instead of preventing the attack (e.g., by patching a
single bug).  By making this assumption, we choose not to study the tactical
patch-by-patch interaction of the attacker and defender. Instead, we model
enterprise security at a more abstract level appropriate for the CISO. For
example, the CISO might allocate a portion of his or her budget to engage a
consultancy, such as WhiteHat or iSEC Partners, to find and fix cross-site
scripting in a particular web application or to require that employees use
SecurID tokens during authentication.
We make the technical assumption that attacker costs are linearly dependent
on defense investments locally.  This assumption does not reflect
patch-by-patch interaction, which would be better represented by a step
function (with the step placed at the cost to deploy the patch).  Instead,
this assumption reflects the CISO's higher-level viewpoint where the
staircase of summed step functions fades into a slope.

We evaluate the defender's strategy by measuring the attacker's cumulative
return-on-investment, the \term{return-on-attack}~(ROA), which has been
proposed previously~\cite{ROA}. By studying this metric, we focus on
defenders who seek to ``cut off the attacker's oxygen,'' that is to reduce
the attacker's incentives for attacking the enterprise.  We do not
distinguish between ``successful'' and ``unsuccessful'' attacks. Instead, we
compare the payoff the attacker receives from his or her nefarious deeds
with the cost of performing said deeds.  We imagine that sufficiently
disincentivized attackers will seek alternatives, such as attacking a
different organization or starting a legitimate business.


In our main result, we show sufficient conditions for a learning-based
reactive strategy to be
competitive with the best fixed proactive defense in the sense that the
competitive ratio between the reactive ROA and the proactive ROA is at most
$1+\epsilon$, for all $\epsilon > 0$, provided the game lasts sufficiently
many rounds (at least $\Omega(1/\epsilon)$). To prove our theorems, we draw
on techniques from the online learning literature. We extend these
techniques to the case where the learner does not know all the game matrix
rows \emph{a priori}, letting us analyze situations where the defender does
not know all the vulnerabilities in advance.  Although our main
results are in a graph-based model with a single attacker, our results
generalize to a model based on Horn clauses with multiple attackers.  Our
results are also robust to switching from ROA to attacker profit and to
allowing the proactive defender to revise the defense allocation a fixed number
of times.

Although myopic bug chasing is most likely an ineffective reactive strategy, we
find that in some situations a \emph{strategic} reactive strategy is as
effective as the
optimal fixed proactive defense.  In fact, we find that the natural strategy of
gradually reinforcing attacked edges by shifting budget from unattacked
edges ``learns'' the attacker's incentives and constructs an effective
defense.  Such a strategic reactive strategy is both easier to implement
than a proactive strategy---because it does not presume that the defender
knows the attacker's intent and capabilities---and is less wasteful than a
proactive strategy because the defender does not expend budget on attacks
that do not actually occur. Based on our results, we encourage CISOs to 
question the assumption that proactive risk management is inherently
superior to reactive risk management.

\paragraph{Organization.}
Section~\ref{sec:formal-model} formalizes our model.
Section~\ref{sec:case-studies} shows that
perimeter defense and defense-in-depth arise naturally in our model.
Section~\ref{sec:reactive-security} presents our main results bounding the
competitive ratio of reactive versus proactive defense strategies.
Section~\ref{sec:advantages} outlines scenarios in which reactive security
out-performs proactive security.  Section~\ref{sec:generalizations}
generalizes our results to Horn clauses and multiple attackers.
Section~\ref{sec:related-work} relates related work.
Section~\ref{sec:conclusions} concludes.

\begin{figure}[t]
\centering
\includegraphics[width=0.5\columnwidth]{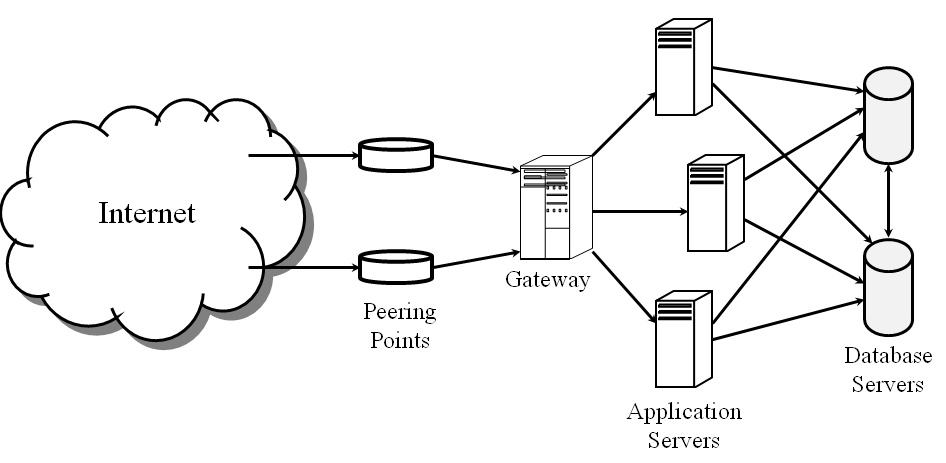}
\caption{An attack graph representing an enterprise data center.}
\label{fig:cloudy}
\end{figure}


\section{Formal Model} \label{sec:formal-model}

In this section, we present a game-theoretic model of attack and defense.
Unlike traditional bug-level attack graphs, our model is meant to capture
a managerial perspective on enterprise security.  The model is somewhat
general in the sense that attack graphs
can represent a number of concrete situations, including a network (see
Figure~\ref{fig:cloudy}), components in a complex software
system~\cite{multiprocessarch}, or an Internet Fraud
``Battlefield''~\cite{battlefield}.

\paragraph{System.}
We model a system using a directed graph $(V, E)$, which defines the game
between an attacker and a defender.  Each vertex $v \in V$ in the graph
represents a state of the system.  Each edge $e \in E$ represents
a state transition the attacker can induce.  For example, a vertex might
represent whether a particular machine in a network has been compromised by
an attacker.  An edge from one machine to another might represent that an
attacker who has compromised the first machine might be able to compromise
the second machine because the two are connected by a network.
Alternatively, the vertices might represent different components in a
software system.  An edge might represent that an attacker sending input to
the first component can send input to the second.

In attacking the system, the attacker selects a path in the graph that
begins with a designated \term{start vertex}~$s$.  Our results hold in
more general models (e.g., based on Horn clauses), but we defer discussing such
generalizations until Section~\ref{sec:generalizations}.  We think of the
attack as driving the system through the series of state transitions indicated
by the edges included in the path.  In the networking example in
Figure~\ref{fig:cloudy}, an attacker
might first compromise a front-end server and then leverage the
server's connectivity to the back-end database server to steal credit card
numbers from the database.

\paragraph{Incentives and Rewards.}
Attackers respond to incentives.  For example, attackers compromise machines
and form botnets because they make money from spam~\cite{spamalytics} or
rent the botnet to others~\cite{sabotageforhire}.  Other attackers steal
credit card numbers because credit card numbers have monetary
value~\cite{miscreants}.  
We model the attacker's incentives by attaching a non-negative \term{reward}
to each vertex.  These rewards are the utility the attacker derives from
driving the system into the state represented by the vertex.  For example,
compromising the database server might have a sizable reward because the
database server contains easily monetizable credit card numbers.  We assume
the start vertex has zero reward, forcing the attacker to undertake some action
before earning utility.  Whenever the attacker mounts an attack, the
attacker receives a \term{payoff} equal to the sum of the rewards of
the vertices visited in the attack path:
$\payoff(a) = \sum_{v \in a} \reward(a)$.
In the example from Figure~\ref{fig:cloudy}, if an attacker compromises both
a front-end server and the database server, the attacker receives both
rewards.

\paragraph{Attack Surface and Cost.}
The defender has a fixed \term{defense budget}~$B > 0$, which the defender
can divide among the edges in the graph according to a \term{defense
allocation}~$d$:
for all $e \in E$, $d(e) \geq 0$ and $\sum_{e \in E} d(e) \leq B$.

The defender's allocation of budget to various edges corresponds to the
decisions made by the Chief Information Security Officer~(CISO) about where
to allocate the enterprise's security resources.  For example, the CISO might
allocate organizational headcount to fuzzing enterprise web applications for
XSS vulnerabilities.  These kinds of investments are continuous in the sense
that the CISO can allocate $1/4$ of a full-time employee to worrying about
XSS.  We denote the set of feasible allocations of budget $B$ on edge set
$E$ by $\mathcal{D}_{B,E}$.  

By defending an edge, the defender makes it more difficult for the attacker
to use that edge in an attack.  Each unit of budget the defender allocates
to an edge raises the cost that the attacker must pay to use that edge in an
attack.  Each edge has an \term{attack surface}~\cite{attacksurface} $w$
that represents the difficulty in defending against that state transition.
For example, a server that runs both Apache and Sendmail has a larger attack
surface than one that runs only Apache because defending the first server is
more difficult than the second.  Formally, the attacker must pay the
following \term{cost} to traverse the edge:
$\cost(a, d) = \sum_{e \in a} d(e)/w(e)$.
Allocating defense budget to an edge does not ``reduce'' an edge's attack
surface.  For example, consider defending a hallway with bricks.  The wider
the hallway (the larger the attack surface), the more bricks (budget
allocation) required to build a wall of a certain height (the cost to the
attacker).

In this formulation, the function mapping the defender's budget allocation
to attacker cost is linear, preventing the defender from ever fully
defending an edge.  Our use of a linear function reflects a level of
abstraction more appropriate to a CISO who can never fully defend assets,
which we justify by observing that the rate of vulnerability discovery in a
particular piece of software is roughly constant~\cite{Rescorla}.
At a lower level of detail, we might replace this function with a step
function, indicating that the defender can ``patch'' a vulnerability by
allocating a threshold amount of budget.

\paragraph{Objective.}
To evaluate defense strategies, we measure the attacker's incentive for
attacking using the \term{return-on-attack}~(ROA)~\cite{ROA}, which we
define as follows:
\begin{gather*}
\ROA(a, d) = \frac{\payoff(a)}{\cost(a, d)}
\end{gather*}
We use this metric for evaluating defense strategy because we believe that
if the defender lowers the ROA sufficiently, the attacker will be
discouraged from attacking the system and will find other uses for his or
her capital or industry.  For example, the attacker might decide to attack
another system.  Analogous results hold if we quantify the attacker's
incentives in terms of profit (e.g., with $\profit(a, d) = \payoff(a) -
\cost(a, d)$), but we focus on ROA for simplicity.  

A purely rational attacker will mount attacks that maximize ROA.  However, a
real attacker might not maximize ROA.  For example, the attacker might not
have complete knowledge of the system or its defense.  We strengthen our
results by considering all attacks, not just those that maximize ROA.

\paragraph{Proactive Security.}
We evaluate our learning-based reactive approach by comparing it
against a \term{proactive} approach to risk management in which the defender
carefully examines the system and constructs a defense in order to fend off
future attacks.  We strengthen this benchmark by providing the proactive
defender complete knowledge about the system, but we require that the
defender commit to a fixed strategy.  To strengthen our results, we state
our main result in terms of \emph{all} such proactive defenders.  In
particular, this class of defenders includes the \term{rational proactive
defender} who employs a defense allocation that minimizes the maximum ROA
the attacker can extract from the system: $\argmin_d \max_a \ROA(a, d)$.


\section{Case Studies} \label{sec:case-studies}

In this section, we describe instances of our model to build the reader's
intuition.  These examples illustrate that some familiar security concepts,
including perimeter defense and defense in depth, arise naturally as optimal
defenses in our model.  These defenses can be constructed either by rational
proactive attackers or converged to by a learning-based reactive defense.

\paragraph{Perimeter Defense.} \label{sec:perimeter-security}
Consider a system in which the attacker's reward is non-zero at exactly one
vertex, $t$.  For example, in a medical system, the attacker's reward for
obtaining electronic medical records might well dominate the value of other
attack targets such as employees' vacation calendars.  In such a system, a
rational attacker will select the minimum-cost path from the start vertex
$s$ to the valuable vertex $t$.  The optimal defense limits the attacker's
ROA by maximizing the cost of the minimum $s$-$t$ path.  The algorithm for
constructing this defense is straightforward~\cite{CMV06}:
\begin{enumerate}
\item Let $C$ be the minimum weight $s$-$t$ cut in $(V, E, w)$.

\item Select the following defense:
\begin{equation*}
d(e) = \begin{cases}
B w(e) / Z & \text{if $e \in C$} \\
0 & \text{otherwise}
\end{cases}\enspace,
\mbox{\ \ where\ } Z = \sum_{e \in C} w(e)\enspace.
\end{equation*}
\end{enumerate}
Notice that this algorithm constructs a \term{perimeter defense}: the
defender allocates the entire defense budget to a single cut in the graph.
Essentially, the defender spreads the defense budget over the attack surface
of the cut.  By choosing the minimum-weight cut, the defender is choosing to
defend the smallest attack surface that separates the start vertex from the
target vertex.  Real defenders use similar perimeter defenses, for example,
when they install a firewall at the boundary between their organization and
the Internet because the network's perimeter is much smaller than its
interior.

\paragraph{Defense in Depth.} \label{sec:defense-in-depth}
Many experts in security practice recommend that defenders employ defense in
depth.  
Defense in depth rises naturally in our model
as an optimal defense for some systems.  Consider, for example, the system
depicted in Figure~\ref{fig:simple-data-center-net}.  This attack graph is a
simplified version of the data center network depicted in
Figure~\ref{fig:cloudy}.  Although the attacker receives the largest reward
for compromising the back-end database server, the attacker also receives
some reward for compromising the front-end web server.  Moreover, the
front-end web server has a larger attack surface than the back-end database
server because the front-end server exposes a more complex interface (an
entire enterprise web application), whereas the database server exposes only
a simple SQL interface.  Allocating defense budget to the left-most edge
represents trying to protect sensitive database information with a complex
web application firewall instead of database access control lists (i.e.,
possible, but economically inefficient). 

\begin{figure}[t]
\centering
\includegraphics[width=0.5\columnwidth]{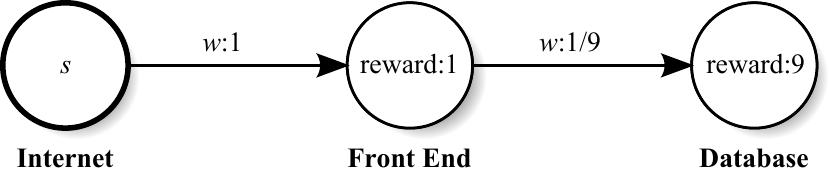}
\caption{Attack graph representing a simplified data center network.}
\label{fig:simple-data-center-net}
\end{figure}

The optimal defense against a rational attacker is to allocate half of the
defense budget to the left-most edge and half of the budget to the
right-most edge, limiting the attacker to a ROA of unity.
Shifting the entire budget to the right-most edge (i.e., defending
only the database) is disastrous because the attacker will simply attack the
front-end at zero cost, achieving an unbounded ROA.  Shifting the entire
budget to the left-most edge is also problematic because the
attacker will attack the database (achieving an ROA of $5$).


\section{Reactive Security} \label{sec:reactive-security}

To analyze reactive security, we model the attacker and defender as playing
an iterative game, alternating moves.  First, the defender selects a
defense, and then the attacker selects an attack.  We present a
learning-based reactive defense strategy that is oblivious to vertex rewards
and to edges that have not yet been used in attacks.  We prove a theorem
bounding the competitive ratio between this reactive strategy and the best
proactive defense via a series of
reductions to results from the online learning theory literature.  Other
applications of this literature include managing stock
portfolios~\cite{portfolio}, playing zero-sum games~\cite{FS99}, and
boosting other machine learning heuristics~\cite{boost}.  Although we
provide a few technical extensions, our main contribution comes from
applying results from online learning to risk management.



\paragraph{Repeated Game.}
We formalize the repeated game between the defender and the attacker as
follows.  In each round $t$ from $1$ to $T$:
\begin{enumerate}
\item The defender chooses defense allocation $d_t(e)$ over the edges $e \in
E$.
\item The attacker chooses an attack path $a_t$ in $G$.
\item The path $a_t$ and attack surfaces $\left\{w(e) : e \in a_t\right\}$
are revealed to the defender.
\item The attacker pays $\cost(a_t, d_t)$ and gains $\payoff(a_t)$.
\end{enumerate}
In each round, we let the attacker choose the attack path after the defender
commits to the defense allocation because the defender's budget allocation is
not a secret (in the sense of a cryptographic key).  Following the ``no
security through obscurity'' principle, we make the conservative assumption
that the attacker can accurately determine the defender's budget allocation.

\paragraph{Defender Knowledge.}
Unlike proactive defenders, reactive defenders do not know all of the
vulnerabilities that exist in the system in advance.  (If defenders had
complete knowledge of vulnerabilities, conferences such as Black Hat
Briefings 
would serve little purpose.) Instead, we reveal an
edge (and its attack surface) to the defender after the attacker uses the
edge in an attack.  For
example, the defender might monitor the system and learn how the attacker
attacked the system by doing a post-mortem analysis of intrusion logs.
Formally, we define a \term{reactive defense
strategy} to be a function from attack sequences $\{a_i\}$ and the subsystem
induced by the edges contained in $\bigcup_i a_i$ to defense allocations such
that $d(e) = 0$ if edge $e \not \in \bigcup_i a_i$.  Notice that this
requires the defender's strategy to be oblivious to the system beyond the
edges used by the attacker.

\begin{algorithm}
\begin{itemize}
\item Initialize $E_0 = \emptyset$
\item For each round $t \in \{2,...,T\}$
\begin{itemize}
\item Let $E_{t-1} = E_{t-2}\cup E(a_{t-1})$
\item For each $e\in E_{t-1}$, let
\begin{align*}
S_{t-1}(e) &= \begin{cases}
S_{t-2}(e)+M(e,a_{t-1}) & \text{if $e\in E_{t-2}$} \\
M(e,a_{t-1})            & \text{otherwise.} \\
\end{cases} \\
\tilde{P}_t(e) &= \beta_{t-1}^{S_{t-1}(e)} \\
P_t(e) &= \frac{\tilde{P}_t(e)}{\sum_{e'\in E_t}\tilde{P}_t(e')}\enspace,
\end{align*} \\
where $M(e,a) = -\indic{e\in a}/w(e)$ is a matrix with $|E|$ rows and a column for each attack.
\end{itemize}
\end{itemize}
\caption{A reactive defense strategy for hidden edges.}
\label{alg:mw-online-edges}
\end{algorithm}

\vfill\eject

\paragraph{Algorithm.}
Algorithm~\ref{alg:mw-online-edges} is a reactive defense strategy based on
the multiplicative update learning algorithm~\cite{CFHHSW97,FS99}.  
The algorithm reinforces edges on the attack path multiplicatively, taking
the attack surface into account by allocating more budget to
easier-to-defend edges.  When new edges are revealed, the algorithm
re-allocates budget uniformly from the already-revealed edges to the newly
revealed edges.  We state the algorithm in terms of a normalized defense
allocation $P_t(e) = d_t(e)/B$.  Notice that this algorithm is oblivious to
unattacked edges and the attacker's reward for visiting each vertex.  An
appropriate setting for the algorithm parameters $\beta_t \in [0,1)$ will be
described below.

The algorithm begins without any knowledge of the graph whatsoever, and so
allocates no defense budget to the system.  Upon the \nth{t}{th} attack on
the system, the algorithm updates $E_t$ to be the set of edges revealed up
to this point, and updates $S_t(e)$ to be a weight count of the number of
times $e$ has been used in an attack thus far.  For each edge that has ever
been revealed, the defense allocation $P_{t+1}(e)$ is chosen to be
$\beta_t^{S_t(e)}$ normalized to sum to unity over all edges $e \in E_t$.
In this way, any edge attacked in round $t$ will have its defense allocation
reinforced.

The parameter $\beta$ controls how aggressively the defender reallocates
defense budget to recently attacked edges.  If $\beta$ is infinitesimal, the
defender will move the entire defense budget to the edge on the most recent
attack path with the smallest attack surface.  If $\beta$ is enormous, the
defender will not be very agile and, instead, leave the defense budget in
the initial allocation.  For an appropriate value of $\beta$, the algorithm
will converge to the optimal defense strategy.  For instance, the min cut in
the example from Section~\ref{sec:perimeter-security}.

\paragraph{Theorems.}
To compare this reactive defense strategy to all proactive defense
strategies, we use the notion of \term{regret} from online learning theory.
The following is an additive regret bound relating the attacker's profit
under reactive and proactive defense strategies.

\begin{thm} \label{thm:profit-regret-online}
The average attacker profit against Algorithm~\ref{alg:mw-online-edges}
converges to the average attacker profit against the best proactive defense.
Formally, if defense allocations $\{d_t\}_{t=1}^T$ are output by
Algorithm~\ref{alg:mw-online-edges} with parameter sequence
$\beta_s = \left(1+\sqrt{2\log|E_s|/(s+1)}\right)^{-1}$
on any system $(V,E,w,\reward,s)$ revealed online and any attack sequence
$\{a_t\}_{t=1}^T$, then 
\begin{eqnarray*}
\frac{1}{T}\sum_{t=1}^T\profit(a_t,d_t) - \frac{1}{T}\sum_{t=1}^T\profit(a_t,d^\star) &\leq& B\sqrt{\frac{\log|E|}{2T}}+\frac{B(\log|E|+\overline{w^{-1}})}{T}\enspace,
\end{eqnarray*}
for all proactive defense strategies $d^\star \in \mathcal{D}_{B,E}$ where
$\overline{w^{-1}} = |E|^{-1}\sum_{e \in E} w(e)^{-1}$, the mean of the
surface reciprocals.
\end{thm}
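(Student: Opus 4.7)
My plan is to reduce the profit-regret statement to a regret bound for a multiplicative-weights algorithm on gains, then carry out a potential-function analysis that adapts the classical Hedge argument to the two non-standard features of Algorithm~\ref{alg:mw-online-edges}: a time-varying learning rate, and an expert set that grows as edges are revealed by attacks.

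First, since $\profit(a_t,d) = \payoff(a_t) - \cost(a_t,d)$ and $\payoff(a_t)$ does not depend on the defense, the left-hand side equals $\frac{1}{T}\sum_t [\cost(a_t,d^\star) - \cost(a_t,d_t)]$. Writing the normalised allocation as $P = d/B$, one has $\cost(a_t,d) = B\langle P, v_t\rangle$ with $v_t(e) = \indic{e \in a_t}/w(e)$, so it suffices to prove the (unnormalised) regret bound
\begin{align*}
\sum_{t=1}^T \langle P^\star, v_t\rangle - \sum_{t=1}^T \langle P_t, v_t\rangle \leq \sqrt{T\log|E|/2} + \log|E| + \overline{w^{-1}}.
\end{align*}
Since $M(e,a_t) = -v_t(e)$, Algorithm~\ref{alg:mw-online-edges} is exactly exponential weights on the gain stream $\{v_t\}$ with time-varying learning rate $\eta_s = -\log\beta_s$.

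I would then run the classical potential-function argument with $\Phi_t = \sum_{e \in E_{t-1}} \beta_{t-1}^{S_{t-1}(e)}$. The per-round recursion $\log(\Phi_{t+1}/\Phi_t) \leq -\eta_{t-1}\langle P_t, v_t\rangle + \eta_{t-1}^2/8$, from Hoeffding's lemma applied to a gain bounded in $[0,1]$, combines with the Jensen-type lower bound $\log\Phi_{T+1} \geq \eta_T\sum_t\langle P^\star,v_t\rangle - \log|E|$ valid for any $P^\star \in \Delta(E)$. Telescoping the recursion and substituting $\eta_s = \sqrt{2\log|E_s|/(s+1)}$ balances the two terms and recovers the leading $\sqrt{(\log|E|)/(2T)}$ rate, in the spirit of Freund and Schapire's analysis extended to a time-varying learning rate.

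The main obstacle is closing the gap between the classical fixed-expert-set analysis and Algorithm~\ref{alg:mw-online-edges}'s adaptive set $E_{t-1}$. When edge $e$ is first attacked in round $t$, it enters $E_{t-1}$ with non-unit initial weight $\beta_{t-1}^{-1/w(e)} = 1 + O(\eta_{t-1}/w(e))$ rather than the value $1$ implicit in the classical initialisation, and the lower bound above tacitly assumes a comparator supported on all of $E_T$. I plan to bridge the gap by writing $\Phi_t$ in an extended form in which every still-hidden edge carries a phantom weight of $1$; the standard telescoping then goes through uniformly across all of $E$, while the discrepancies introduced at each edge's ``reveal'' event aggregate into the additive $\overline{w^{-1}}$ correction in the bound. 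Since $\eta_s$ is monotone decreasing in $s$, the per-round inequality can be aggregated via the standard adaptive-learning-rate trick of Auer, Cesa-Bianchi, and Gentile, producing the residual $\log|E|/T$ term that also appears in the stated bound.
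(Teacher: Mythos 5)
Your overall skeleton is right and matches the paper's at the top level: drop $\payoff$ to reduce profit-regret to a cost (gain) regret for exponential weights, get a $\sqrt{T\log|E|}$-type term from the multiplicative-weights analysis, and charge the growing expert set an additive $\overline{w^{-1}}$. But the core of your plan --- a direct potential-function analysis of the time-varying-rate, growing-expert-set algorithm --- has a concrete failure point: the learning rate $\eta_s=\sqrt{2\log|E_s|/(s+1)}$ is \emph{not} monotone decreasing, because $|E_s|$ can jump when an attack reveals many new edges (e.g.\ $|E_1|=1$ gives $\eta_1=0$, while $\eta_2>0$ once a second edge appears), so the Auer--Cesa-Bianchi--Gentile aggregation you invoke does not apply as stated. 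The paper sidesteps the adaptive-rate analysis entirely: since Algorithm~\ref{alg:mw-online-edges} recomputes its weights from scratch as $\beta_{t-1}^{S_{t-1}(e)}$, its round-$t$ allocation is \emph{identical} to Algorithm~\ref{alg:mw-basic} run with a single fixed parameter on the revealed subgraph; the regret bound is then obtained for the fixed-$\beta$, full-knowledge algorithm by citing Freund--Schapire's Corollary~4 after shifting the game matrix by $+1$ into $[0,1]$ (the normalization makes the allocations shift-invariant), and the hidden-edge algorithm is compared to it round by round. A second, smaller issue: with the prescribed rate $\eta\approx\sqrt{2\log|E|/T}$, the Hoeffding term $\eta^2/8$ per round yields a leading constant of $\tfrac{1}{\sqrt2}+\tfrac{\sqrt2}{8}$ rather than $\tfrac{1}{\sqrt2}$, so your recursion does not recover the stated bound; you would need the tighter Freund--Schapire inequality $\beta^{x}\leq 1-(1-\beta)x$ for $x\in[0,1]$ (or a different tuning of $\eta$).

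Your ``phantom weight'' idea for hidden edges is essentially the paper's simulation lemma, but your accounting of where $\overline{w^{-1}}$ comes from is off. Under phantom weights there is no numerator discrepancy at all: a hidden edge sits at weight $\beta^{0}=1$ and is multiplied by $\beta^{-1/w(e)}$ on its first attack, exactly as in the full-knowledge algorithm. The entire difference is the normalizing set, and shrinking it only \emph{increases} the share given to already-revealed (hence attacked) edges, which can only increase the cost inflicted. The one place the hidden-edge defender genuinely loses is each edge's first-attack round, where it has allocated $0$ to that edge while the full-knowledge algorithm allocates at most $B/|E|$, costing at most $B/(|E|\,w(e))$; summing over the single first reveal of each edge gives exactly $B\,\overline{w^{-1}}$. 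That explicit round-by-round comparison (the paper's Lemma~\ref{lem:relating-alg-1-2}) is what you need in place of the vaguer claim that reveal-event discrepancies ``aggregate'' into the correction term; as written, that step of your plan is not yet a proof.
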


\begin{rem}
We can interpret Theorem~\ref{thm:profit-regret-online} as establishing
sufficient conditions under which a
reactive defense strategy is within an additive constant of the best
proactive defense strategy.  Instead of carefully analyzing the system to
construct the best proactive defense, the defender need only react to
attacks in a principled manner to achieve almost the same quality of
defense in terms of attacker profit.
\end{rem}
Reactive defense strategies can also be competitive with proactive defense
strategies when we consider an attacker motivated by return on
attack~(ROA).  The ROA formulation is appealing because (unlike with
profit) the objective function does not require measuring attacker cost and
defender budget in the same units.  The next result considers the
competitive ratio between the ROA for a reactive defense strategy and the
ROA for the best proactive defense strategy.
\begin{thm} \label{thm:ROA-competitive-online}
The ROA against Algorithm~\ref{alg:mw-online-edges} converges to the ROA
against best proactive defense.  Formally, consider the cumulative ROA:
\begin{gather*}
\ROA\left(\{a_t\}_{t=1}^T,\{d_t\}_{t=1}^T\right) =
\frac{\sum_{t=1}^T\payoff(a_t)}{\sum_{t=1}^T\cost(a_t, d_t)}
\end{gather*}
(We abuse notation slightly and use singleton arguments to represent the
corresponding constant sequence.) If defense allocations $\{d_t\}_{t=1}^T$
are output by Algorithm~\ref{alg:mw-online-edges} with parameters
$\beta_s=\left(1+\sqrt{2\log|E_s|/(s+1)}\right)^{-1}$
on any system $(V,E,w,\reward,s)$ revealed online, such that $|E|>1$, and
any attack sequence $\{a_t\}_{t=1}^T$, then for all $\alpha > 0$ and
proactive defense strategies $d^\star\in\mathcal{D}_{B,E}$
\begin{gather*}
\frac{\ROA\left(\{a_t\}_{t=1}^T,\{d_t\}_{t=1}^T\right)}
{\ROA\left(\{a_t\}_{t=1}^T,d^\star\right)}
\leq 1+\alpha\enspace,
\end{gather*}
provided $T$ is sufficiently large.\footnote{To wit: $T \geq
\left(\frac{13}{\sqrt{2}}\left(1+\alpha^{-1}\right)\left(\sum_{e\in
\inc(s)}w(e)\right)\right)^2\log|E|$.}
\end{thm}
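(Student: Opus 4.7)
The plan is to reduce the multiplicative ROA bound to the additive profit regret of Theorem~\ref{thm:profit-regret-online}. Because $\payoff(a_t)$ depends only on the attack and not on the defense, the cumulative payoff cancels in the ratio:
\[
\frac{\ROA(\{a_t\}_{t=1}^T,\{d_t\}_{t=1}^T)}{\ROA(\{a_t\}_{t=1}^T,d^\star)} \;=\; \frac{\sum_{t=1}^T\cost(a_t,d^\star)}{\sum_{t=1}^T\cost(a_t,d_t)}\enspace,
\]
so it suffices to show $\sum_t\cost(a_t,d^\star)\le(1+\alpha)\sum_t\cost(a_t,d_t)$ for all sufficiently large $T$. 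Expanding $\profit=\payoff-\cost$ in Theorem~\ref{thm:profit-regret-online} and rearranging gives
\[
\sum_{t=1}^T \cost(a_t,d^\star)-\sum_{t=1}^T\cost(a_t,d_t)\;\le\; T\,\epsilon(T)\enspace,
\]
where $\epsilon(T)=B\sqrt{\log|E|/(2T)}+B(\log|E|+\overline{w^{-1}})/T$ is the regret bound of that theorem. The remaining job is to lower bound $\sum_t\cost(a_t,d_t)$ by a quantity proportional to $T$, so that $T\epsilon(T)$ can be absorbed at a controlled rate.

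For that lower bound I would apply Theorem~\ref{thm:profit-regret-online} a second time, now comparing the reactive strategy against the \emph{specific} perimeter defense of Section~\ref{sec:case-studies} on the cut $\inc(s)$, namely $d_0(e)=Bw(e)/Z$ for $e\in\inc(s)$ and zero otherwise, where $Z=\sum_{e\in\inc(s)}w(e)$. This $d_0$ lies in $\mathcal{D}_{B,E}$, and because every attack path begins at $s$ it must traverse some edge $e\in\inc(s)$, yielding $\cost(a_t,d_0)\ge d_0(e)/w(e)=B/Z$ regardless of the path chosen by the adversary. Instantiating Theorem~\ref{thm:profit-regret-online} with $d^\star=d_0$ therefore gives
\[
\sum_{t=1}^T\cost(a_t,d_t)\;\ge\;\sum_{t=1}^T\cost(a_t,d_0)-T\epsilon(T)\;\ge\;T\!\left(\tfrac{B}{Z}-\epsilon(T)\right)\enspace.
\]

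Combining the two estimates gives
\[
\frac{\sum_t\cost(a_t,d^\star)}{\sum_t\cost(a_t,d_t)}\;\le\; 1+\frac{\epsilon(T)}{B/Z-\epsilon(T)}\enspace,
\]
which is at most $1+\alpha$ precisely when $(1+\alpha^{-1})\epsilon(T)\le B/Z$. Substituting the closed form of $\epsilon(T)$ and bounding the lower-order $1/T$ term by a constant multiple of the leading $1/\sqrt{T}$ term collapses this into a single square-root condition of the form $T\ge (C(1+\alpha^{-1})Z)^2\log|E|$, which reproduces the threshold in the footnote (the constant $13/\sqrt{2}$ being the price of merging the two terms of $\epsilon(T)$). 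The main obstacle is precisely the lower bound in the previous paragraph: the adversarial attack sequence precludes any pathwise argument for $\sum_t\cost(a_t,d_t)$, so one must find a proactive defense against which \emph{every} attack pays nontrivial cost and then re-use the regret theorem to transfer that lower bound to the reactive algorithm. The perimeter defense on $\inc(s)$ is the natural choice because its guaranteed per-attack cost $B/Z$ is independent of the attack path and lets $Z$ appear cleanly in the final $T$ threshold.
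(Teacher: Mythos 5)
Your proposal is correct and follows essentially the same route as the paper: the paper's Lemma~\ref{lem:max-gt-value} computes the game value $V=B/\sum_{e\in\inc(s)}w(e)$ whose witnessing allocation is exactly your perimeter defense $d_0$ on $\inc(s)$, and the paper likewise combines the profit-regret bound of Theorem~\ref{thm:profit-regret-online} with the resulting linear-in-$T$ lower bound on $\sum_t\cost(a_t,d_t)$ to absorb the sublinear regret into the $(1+\alpha)$ ratio. The only difference is presentational (you invoke the regret bound twice explicitly rather than routing through a maximin lemma), and your treatment of the final constant is no less precise than the paper's own.
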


\begin{rem}
Notice that the reactive defender can use \emph{the same algorithm}
regardless of whether the attacker is motivated by profit or by ROA.  As
discussed in Section~\ref{sec:profit-v-roa} the optimal proactive defense is
not similarly robust.
\end{rem}
We present proofs of these theorems in Appendix~\ref{apx:proofs}.  We first
prove the theorems in the simpler setting where the defender knows the
entire graph.  Second, we remove the hypothesis that the defender knows the
edges in advance.

\paragraph{Lower Bounds.}
In Appendix~\ref{apx:proofs}, we use a two-vertex, two-edge graph to
establish a lower bound on the competitive ratio of the ROA for all reactive
strategies.  The lower bound shows that the analysis of
Algorithm~\ref{alg:mw-online-edges} is tight and that
Algorithm~\ref{alg:mw-online-edges} is optimal given the information
available to the algorithm.  The proof gives an example where the best
proactive defense (slightly) out-performs every reactive strategy,
suggesting the benchmark is not unreasonably weak.


\section{Advantages of Reactivity} \label{sec:advantages}

In this section, we examine some
situations in which a reactive defender out-performs a proactive defender.
Proactive defenses hinge on the defender's model of the attacker's
incentives.  If the defender's model is inaccurate, the defender will
construct a proactive defense that is far from optimal.  By contrast, a
reactive defender need not reason about the attacker's incentives directly.
Instead, the reactive defender learns these incentives by observing
the attacker in action.

\paragraph{Learning Rewards.}
One way to model inaccuracies in the defender's estimates of the attacker's
incentives is to hide the attacker's rewards from the defender.  Without
knowledge of the payoffs, a proactive defender has difficulty limiting the
attacker's ROA.  Consider, for example, the star system whose edges have
equal attack surfaces, as depicted in Figure~\ref{fig:star}.  Without
knowledge of the attacker's rewards, a proactive defender has little choice
but to allocate the defense budget equally to each edge (because the edges
are indistinguishable).  However, if the attacker's reward is concentrated
at a single vertex, the competitive ratio for attacker's ROA (compared to
the rational proactive defense) is the number of leaf vertices.  (We can, of
course, make the ratio worse by adding more vertices.)  By contrast, the
reactive algorithm we analyze in Section~\ref{sec:reactive-security} is
competitive with the rational proactive defense because the reactive
algorithm effectively learns the rewards by observing which attacks the
attacker chooses.

\begin{figure}[t]
\begin{minipage}[t]{0.4\textwidth}
\vspace{0pt}
\includegraphics[width=\linewidth]{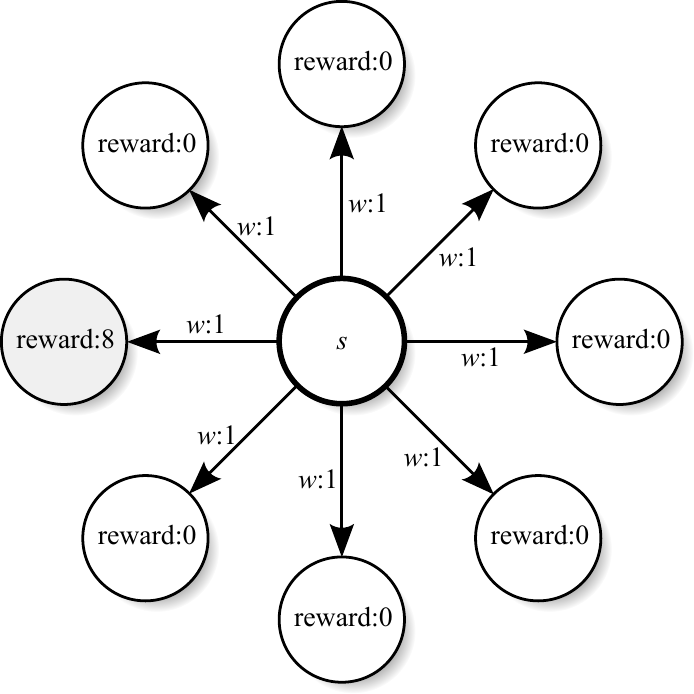}
\caption{Star-shaped attack graph with rewards concentrated in an unknown
vertex.}
\label{fig:star}
\end{minipage}
\hfill
\begin{minipage}[t]{0.5\textwidth}
\vspace{1.94cm}
\includegraphics[width=\linewidth]{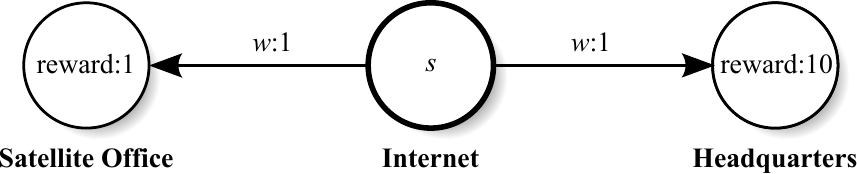}
\vspace{1.3cm}
\caption{An attack graph that separates the minimax strategies optimizing
ROA and attacker profit.}
\label{fig:left-right}
\end{minipage}
\end{figure}    

\paragraph{Robustness to Objective.} \label{sec:profit-v-roa}
Another way to model inaccuracies in the defender's estimates of the
attacker's incentives is to assume the defender mistakes which of profit and
ROA actually matter to the attacker.  The defense constructed by a rational
proactive defender depends crucially on whether the attacker's actual
incentives are based on profit or based on ROA, whereas the reactive
algorithm we analyze in Section~\ref{sec:reactive-security} is robust to this
variation.  In particular, consider the system depicted in
Figure~\ref{fig:left-right}, and assume the defender has a budget of $9$.
If the defender believes the attacker is motivated by profit, the rational
proactive defense is to allocate the entire defense budget to the right-most
edge (making the profit $1$ on both edges).  However,
this defense is disastrous when viewed in terms of ROA because the ROA for
the left edge is infinite (as opposed to near unity when the proactive
defender optimizes for ROA).

\paragraph{Catachresis.}
The defense constructed by the rational proactive defender is optimized for
a rational attacker.  If the attacker is not perfectly rational, there is
room for out-performing the rational proactive defense.  There are a number
of situations in which the attacker might not mount ``optimal'' attacks:
\begin{itemize}
\item The attacker might not have complete knowledge of the attack graph.
Consider, for example, a software vendor who discovers five equally severe
vulnerabilities in one of their products via fuzzing.  According to
proactive security, the defender ought to dedicate equal
resources to repairing these five vulnerabilities.  However, a reactive
defender might dedicate more resources to fixing a vulnerability actually
exploited by attackers in the wild.  We can model these situations by
making the attacker oblivious to some edges.

\item The attacker might not have complete knowledge of the defense
allocation.  For example, an attacker attempting to invade a corporate
network might target computers in human resources without realizing that
attacking the customer relationship management database in sales has a
higher return-on-attack because the database is lightly defended.
\end{itemize}
By observing attacks, the reactive strategy learns a defense tuned for
the \emph{actual} attacker, causing the attacker to receive a lower ROA.



\section{Generalizations} \label{sec:generalizations}


\paragraph{Horn Clauses.}
Thus far, we have presented our results using a graph-based system model.
Our results extend, however, to a more general system model based on Horn
clauses.  Datalog programs, which are based on Horn clauses, have been used
in previous work to represent vulnerability-level attack
graphs~\cite{OBM06}.  A Horn clause is a statement in propositional logic of
the form $p_1 \land p_2 \land \cdots \land p_n \to q.$ The propositions
$p_1, p_2, \ldots, p_n$ are called the \term{antecedents}, and $q$ is called
the \term{consequent}.  The set of antecedents might be empty, in which case
the clause simply asserts the consequent.  Notice that Horn clauses are
negation-free.  In some sense, a Horn clause represents an edge in a
hypergraph where multiple pre-conditions are required before taking a
certain state transition.

In the Horn model, a system consists of a set of Horn clauses, an attack
surface for each clause, and a reward for each proposition.  The defender
allocates defense budget among the Horn clauses.  To mount an attack, the
attacker selects a \term{valid proof}: an ordered list of rules such that
each antecedent appears as a consequent of a rule earlier in the list.  For
a given proof $\Pi$,
\begin{align*}
\cost(\Pi, d) &= \sum_{c \in \Pi} d(c)/w(e) &
\payoff(\Pi) &= \sum_{p \in \brak{\Pi}} \reward(p)\enspace,
\end{align*}
where $\brak{\Pi}$ is the set of propositions proved by $\Pi$ (i.e., those
propositions that appear as consequents in $\Pi$).  Profit and ROA are
computed as before.

Our results generalize to this model directly.  Essentially, we need only
replace each instance of the word ``edge'' with ``Horn clause'' and ``path''
with ``valid proof.''  For example, the rows of the matrix $M$ used
throughout the proof become the Horn clauses, and the columns become the
valid proofs (which are numerous, but no matter).  The entries of the matrix
become $M(c, \Pi) = 1/w(c)$, analogous to the graph case.  The one
non-obvious substitution is $\inc(s)$, which becomes the set of clauses that
lack antecedents.

\paragraph{Multiple Attackers.}
We have focused on a security game between a \emph{single} attacker and a
defender.  In practice, a security system might be attacked by
several uncoordinated attackers, each with different information
and different objectives.  Fortunately, we can show that a model with
multiple attackers is mathematically equivalent to a model with a single
attacker with a randomized strategy: Use the set of attacks, one per
attacker, to define a distribution over edges where the probability of an
edge is linearly proportional to the number of attacks which use the edge.
This precludes the interpretation of an attack as an $s$-rooted path, but
our proofs do not rely upon this interpretation and our results hold in
such a model with appropriate modifications.

\paragraph{Adaptive Proactive Defenders.}
A simple application of an online learning result~\cite{HW98}, omitted due
to space constraints, modifies our regret bounds for a proactive defender
who re-allocates budget a fixed number of times.  In this model, our results
remain qualitatively the same.


\section{Related Work} \label{sec:related-work}

Anderson~\cite{Anderson2001} and Varian~\cite{Varian2000} informally discuss
(via anecdotes) how the design of information security must take incentives
into account.  August and Tunca~\cite{August2006} compare various ways to
incentivize users to patch their systems in a setting where the users are
more susceptible to attacks if their neighbors do not patch.

Gordon and Loeb~\cite{Gordon2002} and Hausken~\cite{Hausken} analyze the
costs and benefits of security in an economic model (with non-strategic
attackers) where the probability of a successful exploit is a function of
the defense investment.  They use this model to compute the optimal level of
investment.  Varian~\cite{Varian2004} studies various (single-shot) security
games and identifies how much agents invest in security at
equilibrium.  Grossklags~\cite{Grossklags2008} extends this model by
letting agents self-insure. 

Miura et~al.~\cite{Miura2008} study externalities that appear due to
users having the same password across various websites and discuss
pareto-improving security investments.  Miura and Bambos~\cite{Miura2007}
rank vulnerabilities according to a random-attacker model.
Skybox 
and RedSeal 
offer practical systems that
help enterprises prioritize vulnerabilities based on a
random-attacker model.  Kumar et~al.~\cite{Kumar} investigate optimal
security architectures for a multi-division enterprise, taking into account
losses due to lack of availability and confidentiality.  None of the
above papers explicitly model a truly adversarial attacker.

Fultz~\cite{Fultz} generalizes~\cite{Grossklags2008} by modeling attackers
explicitly.  Cavusoglu et~al.~\cite{Cavusoglu2008} highlight the importance
of using a game-theoretic model over a decision theoretic model due to the
presence of adversarial attackers.  However, these models look at idealized
settings that are not generically applicable.  Lye and Wing~\cite{Wing}
study the Nash equilibrium of a single-shot game between an attacker and a
defender that models a particular enterprise security scenario.  Arguably
this model is most similar to ours in terms of abstraction level.  However,
calculating the Nash equilibrium requires detailed knowledge of the
adversary's incentives, which as discussed in the introduction, might not be
readily available to the defender.  Moreover, their game contains multiple
equilibria, weakening their prescriptions.


\section{Conclusions} \label{sec:conclusions}

Many security experts equate reactive security with myopic bug-chasing and
ignore principled reactive strategies when they recommend adopting a
proactive approach to risk management.  In this paper, we establish
sufficient conditions for a learning-based reactive strategy to be
competitive with the best fixed proactive
defense.  Additionally, we show that reactive defenders can out-perform
proactive defenders when the proactive defender defends against attacks that
never actually occur.
Although our model is an abstraction of the complex interplay between
attackers and defenders, our results support the following practical advice
for CISOs making security investments:
\begin{itemize}
\item Employ monitoring tools that let you detect and analyze attacks
against your enterprise.  These tools help focus your efforts on thwarting
real attacks.

\item Make your security organization more agile.  For example, build a
rigorous testing lab that lets you roll out security patches quickly once
you detect that attackers are exploiting these vulnerabilities.

\item When determining how to expend your security budget, avoid overreacting
to the most recent attack.  Instead, consider all previous attacks, but
discount the importance of past attacks exponentially.
\end{itemize}
In some situations, proactive security can out-perform reactive security.
For example, reactive approaches are ill-suited for defending against
catastrophic attacks because there is no ``next round'' in which the
defender can use information learned from the attack.  We hope our results
will lead to a productive discussion of the limitations of our model and the
validity of our conclusions.

Instead of assuming that proactive security is always superior to reactive
security, we invite the reader to consider when a reactive approach might be
appropriate.  For the parts of an enterprise where the defender's budget is
liquid and there are no catastrophic losses, a carefully constructed
reactive strategy can be as effective as the best proactive defense in the
worst case and significantly better in the best case.

\paragraph{Acknowledgments}
We would like to thank Elie Bursztein, Eu-Jin Goh, and Matt Finifter for their
thoughtful comments and helpful feedback. We gratefully acknowledge the support
of the NSF through the TRUST Science and Technology Center and grants
DMS-0707060, CCF-0424422, 0311808, 0448452, and 0627511, and the support of the
AFOSR through the MURI Program, and the support of the Siebel Scholars
Foundation.


\bibliography{react}

\newpage
\appendix


\section{Proofs}
\label{apx:proofs}

We now describe a series of reductions that establish the main results.
First, we prove Theorem~\ref{thm:profit-regret-online} in the simpler
setting where the defender knows the entire graph.  Second, we remove the
hypothesis that the defender knows the edges is advance.  Finally, we extend
our results to ROA.

\paragraph{Profit (Known Edges).}
%
Suppose that the reactive defender is granted full knowledge of the system
$(V, E, w, \reward, s)$ from the outset.  Specifically, the graph, attack
surfaces, and rewards are all revealed to the defender prior to the first
round.  Algorithm~\ref{alg:mw-basic} is a reactive defense strategy that
makes use of this additional knowledge.

\begin{algorithm}
\begin{itemize}
\item For each $e \in E$, initialize $P_1(e) = 1/|E|$. 
\item For each round $t \in \{2, \ldots, T\}$ and $e \in E$, let
\begin{align*}
P_t(e) &= P_{t-1}(e) \cdot \beta^{M(e,a_{t-1})}/{Z_{t}} \\
\mbox{where\hspace{3em}}Z_{t} &= \sum_{e'\in E}P_{t-1}(e)\beta^{M(e',a_{t-1})}
\end{align*}
\end{itemize}
\caption{Reactive defense strategy for known edges using the multiplicative
update algorithm.}
\label{alg:mw-basic}
\end{algorithm}

\begin{lem} \label{lemma:profit-regret-basic}
If defense allocations $\{d_t\}_{t=1}^T$ are output by
Algorithm~\ref{alg:mw-basic} with parameter
$\beta = \left(1+\sqrt{\frac{2\log|E|}{T}}\right)^{-1}$
on any system $(V,E,w,\reward,s)$ and attack sequence $\{a_t\}_{t=1}^T$,
then 
\begin{eqnarray*}
\frac{1}{T}\sum_{t=1}^T\profit(a_t,d_t) - \frac{1}{T}\sum_{t=1}^T\profit(a_t,d^\star) &\leq& B\sqrt{\frac{\log|E|}{2T}}+\frac{B\log|E|}{T}\enspace,
\end{eqnarray*}
for all proactive defense strategies $d^\star \in \mathcal{D}_{B,E}$.
\end{lem}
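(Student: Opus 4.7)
The plan is to recognize Algorithm~\ref{alg:mw-basic} as a direct instantiation of the Hedge / Weighted Majority algorithm applied to a fictitious online game in which the ``experts'' are the $|E|$ edges, and then to invoke the standard additive regret bound from Freund--Schapire. The first step is a trivial but crucial simplification: because $\payoff(a_t)$ is independent of the defender's allocation, the per-round profit difference telescopes to a pure cost difference,
\begin{equation*}
\profit(a_t,d_t)-\profit(a_t,d^\star) \;=\; \cost(a_t,d^\star)-\cost(a_t,d_t),
\end{equation*}
so it suffices to upper bound $\sum_{t=1}^T \cost(a_t,d^\star) - \sum_{t=1}^T \cost(a_t,d_t)$.

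Next I would normalize $P_t = d_t/B$ and $P^\star = d^\star/B$ into probability distributions over $E$, and write $\cost(a_t,d) = B\langle P, \ell_t\rangle$ with the per-round loss vector $\ell_t(e) = \indic{e\in a_t}/w(e) = -M(e,a_t)$. Algorithm~\ref{alg:mw-basic} then reads, verbatim, as the Hedge update $P_{t+1}(e)\propto P_t(e)\beta^{-\ell_t(e)}$ initialized at the uniform distribution $P_1 \equiv 1/|E|$. The textbook potential-function argument (track $\Phi_t = \sum_e P_1(e)\beta^{-\sum_{s<t}\ell_s(e)}$, use $\beta^{-x}\leq 1 - x\ln\beta + \tfrac{1}{2}(x\ln\beta)^2$ to upper bound $\Phi_T$ multiplicatively, and lower bound $\Phi_T$ by the weight concentrated on the best comparator) yields, for losses in $[0,1]$, the regret guarantee
\begin{equation*}
\sum_{t=1}^T\langle P_t,\ell_t\rangle \;\geq\; \sum_{t=1}^T\langle P^\star,\ell_t\rangle \;-\; \sqrt{\tfrac{T}{2}\log|E|}\;-\;\log|E|,
\end{equation*}
after tuning $\beta = (1+\sqrt{2\log|E|/T})^{-1}$ exactly as in the hypothesis. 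Multiplying through by $B$, combining with step one, and dividing by $T$ delivers the claimed bound.

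The main obstacle, and essentially the only place where our setting departs from the off-the-shelf analysis, is the range of $\ell_t$: the standard Hedge bound requires $\ell_t \in [0,1]^{|E|}$, whereas here $\ell_t(e) = 1/w(e)$ on the attacked edges. I would handle this by observing that the standard analysis only uses $\ell_t(e)\in[0,1]$ through the single step $\beta^{-\ell_t(e)}\leq 1 + \ell_t(e)(\beta^{-1}-1) + O(\ell_t(e)^2)$, so the absence of a $\max_e w(e)^{-1}$ factor in the stated bound reflects the convention (consistent with an attack surface counting primitive weaknesses) that $w(e)\geq 1$; if one instead allowed $w(e)<1$, a scale factor would appear, and indeed this is precisely where the $\overline{w^{-1}}$ correction in Theorem~\ref{thm:profit-regret-online} will later enter when we remove the known-edges assumption. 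Once scaling is in hand, the remaining steps are direct substitutions and require no further machinery beyond the classical multiplicative-weights regret lemma.
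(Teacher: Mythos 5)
Your proposal is correct and follows the same essential reduction as the paper: profit differences telescope to cost differences because $\payoff$ is allocation-independent, the normalized allocation $P_t=d_t/B$ is recognized as multiplicative weights over the edges-as-experts, and the Freund--Schapire regret bound with $\beta=(1+\sqrt{2\log|E|/T})^{-1}$ delivers the $\sqrt{\log|E|/(2T)}+\log|E|/T$ terms after rescaling by $B$. The one place you diverge is in handling the range of the losses. You propose to rerun the Hedge potential argument for the ``gains'' vector $\ell_t(e)=\indic{e\in a_t}/w(e)$; the paper instead observes that the update $P_t(e)\propto P_{t-1}(e)\beta^{M(e,a_{t-1})}$ is invariant under adding a constant to every entry of $M$, shifts $M$ by $+1$ to obtain entries in $[0,1]$ (which, as you correctly note, silently requires $w(e)\geq 1$), and then cites the Freund--Schapire corollary verbatim -- no reproving needed, since the regret expression is itself shift-invariant. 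Your route works but carries two small slips worth fixing: the inequality $\beta^{-x}\leq 1-x\ln\beta+\tfrac{1}{2}(x\ln\beta)^2$ is $e^u\leq 1+u+u^2/2$ with $u=-x\ln\beta>0$, which is false in that regime -- the correct textbook step is the convexity bound $\beta^{-x}\leq 1+(\beta^{-1}-1)x$ for $x\in[0,1]$; and the $\overline{w^{-1}}$ term in Theorem~\ref{thm:profit-regret-online} does not arise from loss scaling but from the hidden-edges simulation argument (Lemma~\ref{lem:relating-alg-1-2}), where newly revealed edges forfeit up to $B/|E|$ of allocation on their first appearance.
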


The lemma's proof is a reduction to the following regret bound from online
learning~\cite[Corollary 4]{FS99}.
\begin{thm} \label{thm:additive-regret} If the multiplicative update algorithm
(Algorithm~\ref{alg:mw-basic}) is run with any game matrix $M$ with elements in
$[0,1]$, and parameter $\beta = \left(1+\sqrt{2\log|E|/T}\right)^{-1}$,
then
\begin{eqnarray*}
\frac{1}{T}\sum_{t=1}^T M(P_t,a_t)-\min_{P^\star\geq0:\sum_{e\in E}P^\star(e)=1}\left\{\frac{1}{T}\sum_{t=1}^T M(P^\star,a_t)\right\} 
&\leq& \sqrt{\frac{\log|E|}{2T}} + \frac{\log|E|}{T}\enspace.
\end{eqnarray*}
\end{thm}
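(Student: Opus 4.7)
The plan is the classical potential-function analysis of the multiplicative-weights algorithm. Let $\tilde{w}_t(e) = \beta^{S_{t-1}(e)}$ be the unnormalized weight of expert $e$ at time $t$, where $S_{t-1}(e) = \sum_{s<t} M(e,a_s)$ is the cumulative loss of $e$, and let $\Phi_t = \sum_{e\in E} \tilde{w}_t(e)$ be the total mass, so that $\Phi_1 = |E|$ and $P_t(e) = \tilde{w}_t(e)/\Phi_t$ is exactly the distribution maintained by Algorithm~\ref{alg:mw-basic}. The proof proceeds by sandwiching $\Phi_{T+1}$ between an upper bound that depends on the algorithm's loss $L_T := \sum_t M(P_t,a_t)$ and a lower bound that depends on any fixed comparator's loss.

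First, I would derive the upper bound. Using $\tilde{w}_{t+1}(e) = \tilde{w}_t(e)\,\beta^{M(e,a_t)}$ together with the convexity inequality $\beta^x \leq 1-(1-\beta)x$ valid for $x\in[0,1]$, a one-step computation gives
$$\Phi_{t+1} = \Phi_t \sum_{e} P_t(e)\,\beta^{M(e,a_t)} \leq \Phi_t\bigl(1 - (1-\beta)\,M(P_t,a_t)\bigr).$$
Telescoping across rounds and applying $1-y\leq e^{-y}$ yields $\Phi_{T+1} \leq |E|\exp\!\bigl(-(1-\beta)L_T\bigr)$. Next, for any pure expert $e^\star$ the single-coordinate bound $\Phi_{T+1}\geq \tilde{w}_{T+1}(e^\star)=\beta^{S_T(e^\star)}$ holds trivially, and taking logarithms of both inequalities and rearranging gives
$$L_T \leq \frac{\ln(1/\beta)}{1-\beta}\,S_T(e^\star) + \frac{\ln|E|}{1-\beta}.$$

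To extend this to any mixed comparator $P^\star$ with $P^\star\geq 0$, $\sum_e P^\star(e)=1$, I would use the linearity $M(P^\star,a_t) = \sum_e P^\star(e) M(e,a_t)$, which implies $\sum_t M(P^\star,a_t) = \sum_e P^\star(e)\,S_T(e) \geq \min_e S_T(e)$; hence the inequality above applies with $S_T(e^\star)$ replaced by the minimum over mixed strategies. Finally I would substitute $\beta = (1+\eta)^{-1}$ with $\eta = \sqrt{2\ln|E|/T}$, use $\sum_t M(P^\star,a_t) \leq T$ (since entries of $M$ lie in $[0,1]$), expand $\ln(1+\eta)$ and $1/(1-\beta) = (1+\eta)/\eta$ by elementary inequalities, and divide through by $T$ to produce the two stated terms $\sqrt{\ln|E|/(2T)}$ and $\ln|E|/T$.

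The main obstacle is obtaining the sharp constant $1/\sqrt{2}$ on the leading square-root term: the naive convexity bound $\beta^x \leq 1-(1-\beta)x$ together with the tuning $\eta=\sqrt{2\ln|E|/T}$ produces an extra constant factor and does not, by itself, deliver $\sqrt{\ln|E|/(2T)}$. To close this gap I would replace the convexity step with a Hoeffding-style refinement: writing $\beta^x = e^{-\eta x}$ with $\eta = \ln(1/\beta)$ and applying Hoeffding's lemma to the random variable $M(e,a_t)\in[0,1]$ sampled under $P_t$ gives $\sum_e P_t(e) e^{-\eta M(e,a_t)} \leq \exp\!\bigl(-\eta M(P_t,a_t) + \eta^2/8\bigr)$, which propagates through the telescoping potential argument to yield the tight per-round quadratic correction $\eta^2/8$ instead of the slacker $\eta^2/2$ one gets from convexity alone, producing the claimed constants after tuning.
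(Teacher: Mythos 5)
The paper does not actually prove this statement; it imports it wholesale as Corollary~4 of Freund and Schapire~\cite{FS99} and uses it as a black box in the proof of Lemma~\ref{lemma:profit-regret-basic}. So the right comparison is with the source, and your potential-function argument is exactly the standard proof of the underlying bound there: the steps up to $L_T \le \frac{\ln(1/\beta)}{1-\beta}\,S_T(e^\star) + \frac{\ln|E|}{1-\beta}$ are correct, including the observation that the minimum over mixed comparators is attained at a pure one. Substituting $\beta=(1+\eta)^{-1}$ with $\eta=\sqrt{2\ln|E|/T}$ and using $(1+\eta)\ln(1+\eta)-\eta\le\eta^2/2$, this chain yields an average regret of at most $\sqrt{2\log|E|/T}+\log|E|/T$, which is what the cited corollary of Freund and Schapire delivers --- but it is a factor of $2$ larger in the leading term than the inequality as printed in the statement above.

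The genuine gap is in your last paragraph. You correctly sense that the constant does not come out, but the Hoeffding repair cannot close it for the $\beta$ the theorem prescribes, because $\beta$ is part of the statement and you are not free to re-tune it. Hoeffding's lemma gives the per-round correction $\eta^2/8$ with $\eta=\ln(1/\beta)$, and the resulting regret bound $\frac{\ln|E|}{\eta}+\frac{T\eta}{8}$ equals $\sqrt{(T\ln|E|)/2}$ only at the optimizer $\eta=\sqrt{8\ln|E|/T}$; the theorem instead fixes $\ln(1/\beta)=\ln\bigl(1+\sqrt{2\ln|E|/T}\bigr)\approx\sqrt{2\ln|E|/T}$, for which the first term alone is already about $\sqrt{(T\ln|E|)/2}$ and the sum evaluates to roughly $\tfrac{5}{4}\sqrt{(T\ln|E|)/2}$, still strictly above the claim. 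So neither your convexity analysis nor your proposed refinement establishes the inequality with the constant $1/\sqrt{2}$ at this learning rate; what your (correct) first argument proves is the statement with $\sqrt{2\log|E|/T}$ in place of $\sqrt{\log|E|/(2T)}$. That weaker form is all the downstream results of this paper actually need, since the constants are absorbed elsewhere (e.g.\ in the $13/\sqrt{2}$ of Theorem~\ref{thm:ROA-competitive-online}), but as a proof of the literal statement your proposal falls short, and you should either flag the constant as inherited from the citation or weaken the leading term accordingly.
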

\begin{proof}[of Lemma~\ref{lemma:profit-regret-basic}]
Due to the normalization by $Z_t$, the sequence of defense allocations
$\{P_t\}_{t=1}^T$ output by Algorithm~\ref{alg:mw-basic} is invariant to
adding a constant to all elements of matrix $M$.  Let $M'$ be the matrix
obtained by adding constant $C$ to all entries of arbitrary game matrix $M$,
and let sequences $\{P_t\}_{t=1}^T$ and $\{P_t'\}_{t=1}^T$ be obtained by
running multiplicative update with matrix $M$ and $M'$ respectively.  Then,
for all $e \in E$ and $t \in [T-1]$,
\begin{align*}
P_{t+1}'(e) &= \frac{P_1(e)\beta^{\sum_{i=1}^t M'(e,a_i)}}
        {\sum_{e'\in E}P_1(e') \beta^{\sum_{i=1}^t M'(e',a_i)}} &=& \frac{P_1(e)\beta^{\left(\sum_{i=1}^t M(e,a_i)\right)+tC}}{\sum_{e'\in E}P_1(e')\beta^{\left(\sum_{i=1}^t M(e',a_i)\right)+tC}} \\
&= \frac{P_1(e)\beta^{\sum_{i=1}^t M(e,a_i)}}{\sum_{e'\in E}P_1(e')\beta^{\sum_{i=1}^t M(e',a_i)}} &=& P_{t+1}(e)\enspace.
\end{align*}
In particular Algorithm~\ref{alg:mw-basic} produces the same defense
allocation sequence as if the game matrix elements are increased by one to
\begin{gather*}
M'(e,a) = \begin{cases}
1-1/w(e) & \text{if $e \in a$} \\
1        & \text{otherwise.}
\end{cases}
\end{gather*}
Because this new matrix has entries in $[0,1]$ we can apply
Theorem~\ref{thm:additive-regret} to prove for the original matrix $M$ that
\begin{eqnarray}
\frac{1}{T}\sum_{t=1}^T M(P_t,a_t) -
\min_{P^\star\in\mathcal{D}_{1,E}}
\left\{\frac{1}{T}\sum_{t=1}^T M(P^\star,a_t)\right\} 
&\leq& \sqrt{\frac{\log|E|}{2T}}+\frac{\log|E|}{T}.
\label{eq:thm-add-regret}
\end{eqnarray}
Now, by definition of the original game matrix,
\begin{align*}
M(P_t,a_t) &= \sum_{e \in E}-(P_t(e)/w(e))\cdot\indic{e \in a_t} &=& -\sum_{e \in a_t}P_t(e)/w(e)\\
 &= -B^{-1}\sum_{e\in a_t}d_t(e)/w(e) &=& -B^{-1}\cost(a_t,d_t)\enspace.
\end{align*}
Thus Inequality~\eqref{eq:thm-add-regret} is equivalent to
\begin{eqnarray*}
&& -\frac{1}{T}\sum_{t=1}^T B^{-1} \cost(a_t,d_t) - \min_{d^\star\in\mathcal{D}_{1,E}} \left\{ -\frac{1}{T}\sum_{t=1}^T B^{-1}\cost(a_t,d^\star)\right\} \\
&\leq& \sqrt{\frac{\log|E|}{2T}}+\frac{\log|E|}{T}
\end{eqnarray*}
Simple algebraic manipulation yields
\begin{eqnarray*}
&& \frac{1}{T}\sum_{t=1}^T\profit(a_t,d_t)-\min_{d^\star\in\mathcal{D}_{B,E}}
\left\{ \frac{1}{T}\sum_{t=1}^T\profit(a_t,d^\star)\right\} \\
&=& \frac{1}{T}\sum_{t=1}^T\left(\payoff(a_t)-\cost(a_t,d_t)\right)- 
\min_{d^\star\in\mathcal{D}_{B,E}}
\left\{ \frac{1}{T}\sum_{t=1}^T\left(\payoff(a_t)-\cost(a_t,d^\star)\right)\right\} \\
&=& \frac{1}{T}\sum_{t=1}^T\left(-\cost(a_t,d_t)\right)- 
\min_{d^\star\in\mathcal{D}_{B,E}}
\left\{ \frac{1}{T}\sum_{t=1}^T\left(-\cost(a_t,d^\star)\right)\right\} \\
&\leq& B\sqrt{\frac{\log|E|}{2T}}+B\frac{\log|E|}{T}\enspace.
\end{eqnarray*}
\end{proof}


\paragraph{Profit (Hidden Edges).}
The standard algorithms in online learning assumes that the rows of the
matrix are known in advance.  Here, the edges are not known in advance and
we must relax this assumption using a simulation argument, which is perhaps
the least obvious part of the reduction.
The defense allocation chosen by Algorithm~\ref{alg:mw-online-edges} at time
$t$ is precisely the same as the defense allocation that would have been
chosen by Algorithm~\ref{alg:mw-basic} had the defender run
Algorithm~\ref{alg:mw-basic} on the currently visible subgraph.  The
following lemma formalizes this equivalence.  Note that
Algorithm~\ref{alg:mw-online-edges}'s parameter is reactive: it corresponds
to the Algorithm~\ref{alg:mw-basic}'s parameter, but for the subgraph
induced by the edges revealed so far.  That is, $\beta_t$ depends only on
edges visible to the defender in round $t$, letting the defender actually
run the algorithm.

\begin{lem}
Consider arbitrary round $t \in [T]$. If
Algorithms~\ref{alg:mw-online-edges} and~\ref{alg:mw-basic} are run with
parameters
$\beta_s =\left(1+\sqrt{2\log|E_s|/(s+1)}\right)^{-1}$
for $s\in[t]$ and parameter
$\beta = \left(1+\sqrt{2\log|E_t|/(t+1)}\right)^{-1}$
 respectively, with the latter run on the subgraph induced by
$E_t$, then the defense allocations $P_{t+1}(e)$ output by the
algorithms are identical for all $e \in E_t$.
\end{lem}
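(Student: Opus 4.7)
The plan is to verify the claim by direct expansion: show that both algorithms produce exactly $\beta_t^{\Sigma(e)} / \sum_{e' \in E_t} \beta_t^{\Sigma(e')}$ for $e \in E_t$, where $\Sigma(e) = \sum_{i=1}^{t} M(e,a_i)$. The equality of parameters ($\beta = \beta_t$) is immediate from the stated parameter choices, so the only substantive work is bookkeeping: checking that the quantity $S_t(e)$ that Algorithm~\ref{alg:mw-online-edges} maintains equals $\Sigma(e)$ for every $e \in E_t$.

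First I would unroll Algorithm~\ref{alg:mw-basic}. With uniform initialization $P_1(e) = 1/|E_t|$ and the multiplicative update $P_{s+1}(e) \propto P_s(e)\beta^{M(e,a_s)}$, telescoping gives $P_{t+1}(e) \propto \beta^{\sum_{i=1}^t M(e,a_i)}$, and after normalization the initial $1/|E_t|$ factor cancels. On the other side, Algorithm~\ref{alg:mw-online-edges} at round $t+1$ outputs $\beta_t^{S_t(e)}$ normalized over $E_t$. So the two expressions will coincide once $S_t(e) = \Sigma(e)$ is established for $e \in E_t$.

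The key bookkeeping step — and the part most prone to subtle error — is the handling of rounds that occur before an edge is first revealed. I would fix $e \in E_t$ and let $\tau(e)$ be the first round in which $e$ appears in an attack, so by the definition of $E_s$ we have $e \in E_s$ iff $s \geq \tau(e)$. For $i < \tau(e)$ the edge $e$ is absent from $a_i$ (otherwise it would already lie in $E_i$), and since $M(e,a_i) = -\indic{e \in a_i}/w(e)$, these rounds contribute zero to $\Sigma(e)$. At round $\tau(e)$, Algorithm~\ref{alg:mw-online-edges} takes the \emph{otherwise} branch and sets $S_{\tau(e)}(e) = M(e, a_{\tau(e)})$; from then on the recursive branch $S_i(e) = S_{i-1}(e) + M(e,a_i)$ fires, so a telescoping sum yields $S_t(e) = \sum_{i=\tau(e)}^t M(e,a_i) = \Sigma(e)$.

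I do not anticipate a genuine technical obstacle; the lemma is essentially a verification that Algorithm~\ref{alg:mw-online-edges} simulates Algorithm~\ref{alg:mw-basic} on the visible subgraph. The main pitfall to guard against is the index gymnastics around when edges enter $E_s$ and which $\beta_s$ is in effect at each step. It is worth noting that Algorithm~\ref{alg:mw-online-edges} recomputes $\tilde P_t(e) = \beta_{t-1}^{S_{t-1}(e)}$ afresh at each round rather than accumulating products of earlier $\beta_s$'s — this is what allows the time-varying parameter to be replaced by a single $\beta = \beta_t$ in the comparison and makes the correspondence work cleanly. An inductive presentation on $t$ would also go through if preferred, but the direct closed-form comparison above seems cleanest.
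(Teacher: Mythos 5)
Your proof is correct and follows essentially the same route as the paper's: identify $\tilde P_{t+1}(e)=\beta_t^{S_t(e)}$ with $\beta^{\sum_{i=1}^t M(e,a_i)}$ (using $M(e,a_i)=0$ before $e$ is first revealed) and observe that both algorithms then normalize over $E_t$. The paper states this in one sentence; your version merely makes the bookkeeping around the first-revelation round explicit, which is a faithful elaboration rather than a different argument.
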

\begin{proof}
If $e \in E_t$ then $\tilde{P}_{t+1}(e) = \beta^{\sum_{i=1}^t M(e,a_i)}$
because $\beta_t=\beta$, and the round $t+1$ defense allocation of
Algorithm~\ref{alg:mw-online-edges} $P_{t+1}$ is simply $\tilde{P}_{t+1}$
normalized to sum to unity over edge set $E_t$, which is exactly the defense
allocation output by Algorithm \ref{alg:mw-basic}.
\end{proof}
Armed with this correspondence, we show that
Algorithm~\ref{alg:mw-online-edges} is almost as effective as
Algorithm~\ref{alg:mw-basic}.  In other words, hiding unattacked edges from
the defender does not cause much harm to the reactive defender's ability to
disincentivize the attacker.
\begin{lem}
\label{lem:relating-alg-1-2}
If defense allocations $\{d_{1,t}\}_{t=1}^T$ and $\{d_{2,t}\}_{t=1}^T$ are
output by Algorithms~\ref{alg:mw-online-edges} and~\ref{alg:mw-basic}
with parameters
$\beta_t = \left(1+\sqrt{2\log|E_t|/(t+1)}\right)^{-1}$
for $t\in[T-1]$ and
$\beta = \left(1+\sqrt{2\log|E|/(T)}\right)^{-1}$,
respectively, on a system $(V,E,w,\reward,s)$ and attack sequence
$\{a_t\}_{t=1}^T$, then
\begin{eqnarray*}
\frac{1}{T}\sum_{t=1}^T\profit(a_{t},d_{1,t})
-\frac{1}{T}\sum_{t=1}^T\profit(a_t,d_{2,t})
& \leq & \frac{B}{T}\overline{w^{-1}}\enspace.
\end{eqnarray*}
\end{lem}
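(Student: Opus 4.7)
Since $\profit(a, d) = \payoff(a) - \cost(a, d)$ and both algorithms see the same attack sequence, the payoff terms cancel in the difference and the claim reduces to bounding $\sum_{t=1}^T [\cost(a_t, d_{2,t}) - \cost(a_t, d_{1,t})]$ by $B\,\overline{w^{-1}}$. The plan is to expand each cost as $\cost(a_t, d_t) = B \sum_{e \in a_t} P_t(e)/w(e)$ and split the attacked edges according to whether they were already revealed to Algorithm~\ref{alg:mw-online-edges} ($e \in a_t \cap E_{t-1}$) or being revealed for the first time ($e \in a_t \setminus E_{t-1}$), then handle the two regimes separately.

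For newly revealed edges, Algorithm~\ref{alg:mw-online-edges} allocates nothing, so only Algorithm~\ref{alg:mw-basic} contributes. Any edge $e \notin E_{t-1}$ has never appeared in an attack before round $t$, so its cumulative score $v_t(e) := \sum_{i < t} M(e, a_i)$ is zero, giving $P_{2,t}(e) = 1/\sum_{e' \in E} \beta^{v_t(e')}$. Since $\beta \in (0,1)$ and $v_t(e') \leq 0$ for every $e'$, each summand in the denominator is at least $1$, so the denominator is at least $|E|$ and $P_{2,t}(e) \leq 1/|E|$. As each edge becomes ``newly revealed'' in at most one round, summing over $t$ telescopes to
\begin{eqnarray*}
\sum_{t=1}^T B \sum_{e \in a_t \setminus E_{t-1}} \frac{P_{2,t}(e)}{w(e)} &\leq& \sum_{e \in E} \frac{B}{|E|\, w(e)} \;=\; B\,\overline{w^{-1}}\enspace.
\end{eqnarray*}

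For edges in $a_t \cap E_{t-1}$ the plan is to show the contribution to the cost difference is nonpositive, so that the bound above already accounts for the entire difference. Intuitively, Algorithm~\ref{alg:mw-basic} ``wastes'' mass on never-attacked edges outside $E_{t-1}$, thinning the mass it places on any given attacked edge relative to Algorithm~\ref{alg:mw-online-edges}. Using the preceding equivalence lemma---which identifies $P_{1,t}$ on $E_{t-1}$ with the output of Algorithm~\ref{alg:mw-basic} run on the subgraph induced by $E_{t-1}$---one compares the subgraph-normalized weights to the full-graph-normalized weights and observes that the full-graph normalizer exceeds the subgraph normalizer by exactly the $|E| - |E_{t-1}|$ never-attacked edges (each of which contributes $\beta^{0} = 1$ to the denominator), yielding $P_{2,t}(e) \leq P_{1,t}(e)$ for $e \in E_{t-1}$. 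Summing this nonpositive contribution with the newly-revealed bound and dividing by $T$ gives the stated inequality.

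The main obstacle will be making the revealed-edge comparison rigorous in the presence of the parameter mismatch: Algorithm~\ref{alg:mw-online-edges} effectively uses $\beta_{t-1}$ (depending on $|E_{t-1}|$ and $t$), whereas Algorithm~\ref{alg:mw-basic} uses the fixed $\beta$ (tuned for $|E|$ and $T$), so a naive term-by-term inequality between $\beta^{v_t(e)}$ and $\beta_{t-1}^{v_t(e)}$ need not dominate in one direction. To handle this, I would exploit the monotonicity of $\beta_s$ in $s$ together with $|E_s| \leq |E|$ and $s+1 \leq T$, and argue the required \emph{weighted} inequality (with weights $1/w(e)$ summed over $e \in a_t$) rather than pointwise dominance, absorbing any residual slack into the already-loose $\overline{w^{-1}}$ term.
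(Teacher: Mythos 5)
Your proposal follows the paper's proof almost step for step: cancel the payoffs, split the edges of each attack $a_t$ into those being revealed for the first time and those already in $E_{t-1}$, charge each first reveal at most $B/|E|$ (summing to $B\,\overline{w^{-1}}$ because each edge is revealed at most once), and argue that the already-revealed edges contribute nonpositively to $\sum_t\bigl(\cost(a_t,d_{2,t})-\cost(a_t,d_{1,t})\bigr)$. Your justification of the $1/|E|$ bound (numerator $\beta^{0}=1$ over a normalizer containing $|E|$ terms each at least $1$) is a clean substitute for the paper's observation that $d_{2,t}(e)$ is nonincreasing in $t$ for a never-attacked edge and starts at $B/|E|$; both are fine.

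The divergence is your last paragraph, and the patch you sketch there would not close the step you (rightly) flag. First, $\beta_s=\bigl(1+\sqrt{2\log|E_s|/(s+1)}\bigr)^{-1}$ is not monotone in $s$: the growing $s+1$ pushes $\beta_s$ up, but $|E_s|$ can jump when an attack reveals many new edges, pushing it down, so you cannot order $\beta_{t-1}$ against $\beta$ in general. Second, there is no ``residual slack'' in $B\,\overline{w^{-1}}$ to absorb a positive revealed-edge contribution: if, say, the first attack reveals every edge, the first-reveal charges already account for (essentially all of) $\sum_e B/(|E|w(e))=B\,\overline{w^{-1}}$, and in every later round both algorithms normalize over the same set $E$, so the only remaining discrepancy is exactly the $\beta$ mismatch you worry about --- and with $\beta_{t-1}\le\beta$ the fixed-parameter algorithm places \emph{more} mass on lightly attacked edges, which is the harmful direction. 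To be fair, the paper's own proof does not resolve this either: it simply asserts $d_{2,t}(e)\le d_{1,t}(e)$ for previously attacked $e$, an inequality that is immediate only when the two algorithms share the same $\beta$ (then the allocations differ solely in that Algorithm~\ref{alg:mw-basic} normalizes over the larger set $E\supseteq E_{t-1}$). So your instinct that the parameter mismatch is the crux is correct; what is missing is an actual argument for the revealed-edge dominance (or a restatement with matched parameters), not a way to hide a positive discrepancy inside the additive $\overline{w^{-1}}$ term.
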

\begin{proof}
Consider attack $a_t$ from a round $t \in [T]$ and consider an edge $e \in
a_t$.  If $e \in a_s$ for some $s < t$, then the defense budget allocated to
$e$ at time $t$ by Algorithm~\ref{alg:mw-basic} cannot be greater than the
budget allocated by Algorithm~\ref{alg:mw-online-edges}.
Thus, the instantaneous cost paid by the attacker on $e$ when
Algorithm~\ref{alg:mw-online-edges} defends is at least the cost paid when
Algorithm~\ref{alg:mw-basic} defends: $d_{1,t}(e)/w(e) \geq d_{2,t}(e)/w(e)$.
If $e \notin \bigcup_{s=1}^{t-1} a_s$ then for all $s \in [t]$, $d_{1,s}(e)
= 0$, by definition.  The sequence $\{d_{2,s}(e)\}_{s=1}^{t-1}$ is decreasing
and positive.  Thus $\max_{s<t}d_{2,s}(e)-d_{1,s}(e)$ is optimized
at $s = 1$ and is equal to $B/|E|$.  Finally because each edge $e \in E$
is first revealed exactly once this leads to
\begin{eqnarray*}
\sum_{t=1}^T\cost(a_t,d_{2,t})-\sum_{t=1}^T\cost(a_t,d_{1,t})
= \sum_{t=1}^T\sum_{e\in a_t}\frac{d_{2,t}(e)-d_{1,t}(e)}{w(e)}
\leq \sum_{e \in E}\frac{B}{|E|w(e)}\enspace.
\end{eqnarray*}
Combined with the fact that the attacker receives the same payout whether
Algorithm~\ref{alg:mw-basic} or Algorithm~\ref{alg:mw-online-edges} defends
completes the result.
\end{proof}

\begin{proof}[of Theorem~\ref{thm:profit-regret-online}]
The result follow immediately from Lemma~\ref{lemma:profit-regret-basic} and 
Lemma~\ref{lem:relating-alg-1-2}.
\end{proof}

Finally, notice that Algorithm~\ref{alg:mw-online-edges} enjoys the same time
and space complexities as Algorithm~\ref{alg:mw-basic}, up to constants.

\paragraph{ROA (Hidden Edges).}
We now translate our bounds on profit into bounds on ROA by observing that
ratio of two quantities is small if the quantities are large and their
difference is small.  We consider the competitive ratio between an reactive
defense strategy and the best proactive defense strategy after the following
technical lemma, which asserts that the quantities are large.
\begin{lem} \label{lem:max-gt-value}
For all attack sequences $\{a_t\}_{t=1}^T$,
$\max_{d^\star\in\mathcal{D}_{B,E}}\sum_{t=1}^T\cost(a_t,d^\star) \geq VT$
where game value $V$ is
$\max_{d\in\mathcal{D}_{B,E}}\min_{a}\cost(a,d)
= \frac{B}{\sum_{e\in \inc(s)}w(e)} > 0$,
where $\inc(v) \subseteq E$ denotes the edges incident to vertex $v$.
\end{lem}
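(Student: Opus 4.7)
The plan is to break the lemma into two pieces: (i) the bound $\max_{d^\star}\sum_{t=1}^T\cost(a_t,d^\star) \geq VT$, and (ii) the closed-form evaluation $V = B/\sum_{e\in\inc(s)}w(e) > 0$. The first piece is a short ``best response beats average'' argument, while the real content is the second piece.

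For (i), let $d^V \in \mathcal{D}_{B,E}$ be a defense that attains the outer maximum in the definition of $V$, so that $\min_a \cost(a,d^V) = V$. Then for every round $t$ we have $\cost(a_t,d^V) \geq V$, simply because $a_t$ is one particular attack. Summing over $t$ gives $\sum_{t=1}^T \cost(a_t,d^V) \geq VT$, and taking the max over $d^\star \in \mathcal{D}_{B,E}$ can only increase the left side, which yields the claim.

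For (ii), the key observation is that the minimum-cost attack from $s$ is achieved by a single edge out of $s$: since $\cost(a,d) = \sum_{e\in a} d(e)/w(e)$ is additive in edges and each term is non-negative, any path $a$ that begins with edge $e \in \inc(s)$ satisfies $\cost(a,d) \geq d(e)/w(e)$, and the one-edge path consisting of $e$ alone attains this lower bound. Hence $\min_a \cost(a,d) = \min_{e\in\inc(s)} d(e)/w(e)$. It remains to solve $\max_{d \in \mathcal{D}_{B,E}} \min_{e\in\inc(s)} d(e)/w(e)$. Any budget the defender allocates to edges outside $\inc(s)$ is wasted for this objective, so we may assume $d$ is supported on $\inc(s)$ with $\sum_{e\in\inc(s)}d(e) \leq B$. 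At the optimum the quantities $d(e)/w(e)$ for $e \in \inc(s)$ must all be equal to a common value $v$ (otherwise one could move budget from an edge with a strictly larger ratio to one with a smaller ratio and increase the minimum). Setting $d(e) = v\,w(e)$ and saturating the budget yields $v\sum_{e\in\inc(s)}w(e) = B$, so $V = B/\sum_{e\in\inc(s)}w(e)$. Positivity is then immediate because $B>0$, each $w(e) > 0$, and (implicitly, for the game to be non-degenerate) $\inc(s)$ is finite and non-empty.

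The main obstacle, such as it is, is justifying that one may restrict attention to single-edge attacks and that spending budget off $\inc(s)$ is wasted; once those reductions are made, the remaining max-min is a one-line linear-programming exercise. The equalization argument should be spelled out carefully so that it covers the boundary case where only a subset of edges in $\inc(s)$ receives positive budget (one verifies that leaving any $e \in \inc(s)$ with zero budget forces the minimum to be $0$, which is suboptimal whenever $B>0$).
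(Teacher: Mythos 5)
Your proof is correct and follows essentially the same route as the paper: the same ``best response beats the average'' argument for $\max_{d^\star}\sum_t\cost(a_t,d^\star)\geq VT$, and the same evaluation of $V$ by reducing the attacker's best response to a single $s$-incident edge and taking $d(e)$ proportional to $w(e)$ on $\inc(s)$. Your explicit equalization/LP argument for the max--min is slightly more detailed than the paper's (which simply exhibits $\tilde d$ and asserts its optimality), and your closing caveats about non-trivial attacks and $\inc(s)\neq\emptyset$ match the paper's implicit assumptions.
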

\begin{proof}
Let $d^\star = \argmax_{d\in\mathcal{D}_{B,E}}\min_a\cost(a,d)$ witness the
game's value $V$, then
$\max_{d\in\mathcal{D}_{B,E}}\sum_{t=1}^T\cost(a_t,d)\enspace
\geq\enspace \sum_{t=1}^T\cost(a_t,d^\star) \enspace
\geq\enspace TV$.
Consider the defensive allocation for each $e \in E$.  If $e \in
\inc(s)$, let $\tilde{d}(e)= B w(e)/\sum_{e \in \inc(s)} w(e) > 0$, and
otherwise $\tilde{d}(e)=0$.  This allocation is feasible because
\begin{gather*}
\sum_{e\in E}\tilde{d}(e)
= \frac{B\sum_{e \in \inc(s)}w(e)}{\sum_{e \in \inc(s)}w(e)}
= B\enspace.
\end{gather*}
By definition $\tilde{d}(e)/w(e)=B/\sum_{e\in \inc(s)}w(e)$
for each edge $e$ incident to $s$.  Therefore, $\cost(a,\tilde{d})\geq
B/\sum_{e\in \inc(s)}w(e)$ for any non-trivial attack $a$, which
necessarily includes at least one $s$-incident edge.  Finally,
$V \geq \min_{a}\cost(a,\tilde{d})$ proves
\begin{gather}
V \geq \frac{B}{\sum_{e\in \inc(s)}w(e)}\enspace. \label{eq:V-lower}
\end{gather}
Now, consider a defense allocation $d$ and fix an attack $a$ that minimizes
the total attacker cost under $d$.  At most one edge $e \in a$ can have
$d(e) > 0$, for otherwise the cost under $d$ can be reduced by removing an
edge from $a$.  Moreover any attack $a \in \argmin_{e\in\inc(s)} d(e)/w(e)$
minimizes attacker cost under $d$.  Thus the maximin $V$ is witnessed by
defense allocations that maximize $\min_{e \in \inc(s)} d(e)/w(e)$. This
maximization is achieved by allocation $\tilde{d}$ and so
Inequality~(\ref{eq:V-lower}) is an equality.
\end{proof}
We are now ready to prove the main ROA theorem:
\begin{proof}[of Theorem~\ref{thm:ROA-competitive-online}]
First, observe that for all $B>0$ and all $A,C\in\mathbb{R}$
\begin{eqnarray}
\frac{A}{B}\,\leq\, C & \Longleftrightarrow & A-B\,\leq\,(C-1)B\enspace.\label{eq:regret-to-ratio}
\end{eqnarray}
We will use this equivalence to convert the regret bound on profit to the
desired bound on ROA.  Together Theorem~\ref{thm:profit-regret-online} and
Lemma~\ref{lem:max-gt-value} imply
\begin{eqnarray*}
&&\alpha\sum_{t=1}^T\cost(a_t,d_t) \nonumber \\
&\geq& \alpha\max_{d^\star\in\mathcal{D}_{B,E}}
\sum_{t=1}^T\cost(a_t,d^\star)- 
\alpha\frac{B}{2}\sqrt{T\log|E|}-\alpha B\left(\log|E|+\overline{w^{-1}}\right)\nonumber \\
&\geq& \alpha VT-\alpha\frac{B}{2}\sqrt{T\log|E|}
-\alpha B\left(\log|E|+\overline{w^{-1}}\right)\label{eq:sum-cost-lower}
\end{eqnarray*}
where $V=\max_{d\in\mathcal{D}_{B,E}}\min_a\cost(a,d)>0$. If
\begin{gather*}
\sqrt{T} \geq \frac{13}{\sqrt{2}}\left(1+\alpha^{-1}\right)
\sqrt{\log|E|}\sum_{e\in \inc(s)}w(e)\enspace,
\end{gather*}
we can use inequalities $V = B/\sum_{e\in \inc(s)}w(e)$, 
$\overline{w^{-1}}\leq 2\log|E|$ (since $|E|>1$), and
$\left(\sum_{e\in\inc(s)}w(e)\right)^{-1}\leq 1$ to show
\begin{eqnarray*}
\sqrt{T} \geq \left((1+\alpha)B+\sqrt{\left[(1+\alpha)B+24\alpha V\right](1+\alpha)B}\right) (2\sqrt{2}\alpha V)^{-1}\sqrt{\log|E|}\enspace,
\end{eqnarray*}
which combines with Theorem~\ref{thm:profit-regret-online} and
Inequality~\ref{eq:sum-cost-lower} to imply
\begin{eqnarray*}
\alpha\sum_{t=1}^T\cost(a_t,d_t) 
&\geq& \alpha VT-\alpha\frac{B}{2}\sqrt{T\log|E|}
      -\alpha B\left(\log|E|+\overline{w^{-1}}\right) \\
&\geq& \frac{B}{2}\sqrt{T\log|E|}+B\left(\log|E|+\overline{w^{-1}}\right)  \\
&\geq& \sum_{t=1}^T\profit(a_t,d_t)
      -\min_{d^\star\in\mathcal{D}_{B,E}}\sum_{t=1}^T\profit(a_t,d^\star) \\
&=& \sum_{t=1}^T\left(-\cost(a_t,d_t)\right)
      -\min_{d^\star\in\mathcal{D}_{B,E}}
       \sum_{t=1}^T\left(-\cost(a_t,d^\star)\right)  \\
&=& \max_{d^\star\in\mathcal{D}_{B,E}}\sum_{t=1}^T\cost(a_t,d^\star)
      -\sum_{t=1}^T\cost(a_t,d_t)\enspace.
\end{eqnarray*}
Finally, combining this equation with Equivalence~\ref{eq:regret-to-ratio}
yields the result
\begin{eqnarray*}
&&\frac{\ROA\left(\{a_t\}_{t=1}^T,\{d_t\}_{t=1}^T\right)}
 {\min_{d^\star\in\mathcal{D}_{B,E}}\ROA\left(\{a_t\}_{t=1}^T,d^\star\right)} \\
&=& \frac{\sum_{t=1}^T\payoff(a_t,d_t)}
 {\sum_{t=1}^T\cost(a_t,d_t)}\cdot\max_{d^\star\in\mathcal{D}_{B,E}}
 \frac{\sum_{t=1}^T\cost(a_t,d^\star)}{\sum_{t=1}^T\payoff(a_t,d^\star)} \\
&=& \frac{\max_{d^\star\in\mathcal{D}_{B,E}}\sum_{t=1}^T\cost(a_t,d^\star)}
 {\sum_{t=1}^T\cost(a_t,d_t)} \\
&\leq& 1+\alpha\enspace.
\end{eqnarray*}
\end{proof}

%
 

\section{Lower Bounds} \label{apx:lower-bound}

We briefly argue the optimality of Algorithm~\ref{alg:mw-online-edges} for a
particular graph, i.e.~we show that Algorithm~\ref{alg:mw-online-edges} has
optimal convergence time for small enough $\alpha$, up to constants.  (For
very large $\alpha$, Algorithm~\ref{alg:mw-online-edges} converges in
constant time, and therefore is optimal up to constants, vacuously.)
The argument considers an attacker who randomly selects an attack path,
rendering knowledge of past attacks useless.  Consider a two-vertex graph
where the start vertex $s$ is connected to a vertex $r$ (with reward $1$) by
two parallel edges $e_1$ and $e_2$, each with an attack surface of $1$.
Further suppose that the defense budget $B = 1$.  We first show a lower
bound on all reactive algorithms:

\begin{lem}
for all reactive algorithms $A$, the competitive
ratio $C$ is at least $(x+ \Omega(\sqrt{T}))/x$, i.e.~at least $(T+
\Omega(\sqrt{T}))/T$ because $x \leq T$.
\end{lem}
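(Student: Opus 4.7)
The plan is a Yao-style lower bound argument: I would exhibit a randomized (and hence oblivious) attacker on the two-edge graph under which every reactive algorithm suffers an $\Omega(\sqrt{T})$ cost deficit relative to the best in-hindsight proactive defense, then extract a single adversarial realization by the probabilistic method. Specifically, I would have the attacker draw $a_t$ uniformly and independently from $\{e_1,e_2\}$ in each round $t \in [T]$. Because the attacker's choices are independent of the defender's history, no reactive algorithm---deterministic or randomized---can correlate its budget with the upcoming attack.

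Two short expectation calculations complete the quantitative side. First, for any reactive algorithm the round-$t$ allocation $d_t$ depends only on $a_1,\ldots,a_{t-1}$ and is therefore independent of $a_t$; combined with $w \equiv 1$ and $B=1$, this yields
\[
\mathbb{E}\bigl[\cost(a_t,d_t) \,\big|\, a_1,\ldots,a_{t-1}\bigr]
  \;=\; \tfrac{1}{2}\bigl(d_t(e_1)+d_t(e_2)\bigr) \;=\; \tfrac{1}{2},
\]
so $\mathbb{E}\bigl[\sum_t \cost(a_t,d_t)\bigr] = T/2$. Second, letting $n_i$ be the realized attack count on $e_i$, the best fixed allocation concentrates the entire budget on $\argmax_i n_i$, giving $\max_{d^\star \in \mathcal{D}_{B,E}} \sum_t \cost(a_t,d^\star) = \max(n_1,n_2) = T/2 + \tfrac{1}{2}|n_1-n_2|$. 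Since $n_1 \sim \text{Binomial}(T,1/2)$, a standard Khintchine/binomial absolute-deviation estimate yields $\mathbb{E}|n_1-n_2| = \Omega(\sqrt{T})$.

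Subtracting the two displays gives $\mathbb{E}\bigl[\max(n_1,n_2) - \sum_t \cost(a_t,d_t)\bigr] = \Omega(\sqrt{T})$, so by the probabilistic method some realization of $\{a_t\}$ satisfies $\max(n_1,n_2) - \sum_t \cost(a_t,d_t) \geq \Omega(\sqrt{T})$. Writing $x = \sum_t \cost(a_t,d_t)$ for this realization and using that the attacker's payoff $\sum_t \payoff(a_t) = T$ is the same under any defense, the ROA competitive ratio obeys
\[
C \;=\; \frac{\max_{d^\star}\sum_t\cost(a_t,d^\star)}{\sum_t \cost(a_t,d_t)}
 \;\geq\; \frac{x + \Omega(\sqrt{T})}{x}.
\]
Finally, since per-round cost is bounded by $d_t(e_1)+d_t(e_2)=1$ we have $x \leq T$, and because $(x+c)/x = 1 + c/x$ is decreasing in $x$ for $c>0$, I may replace $x$ by its upper bound $T$ to conclude $C \geq (T+\Omega(\sqrt{T}))/T$, as claimed. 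The one delicate step will be the joint extraction of a realization witnessing both a small denominator and a sufficiently large numerator simultaneously; I would handle this by bounding their difference in expectation---rather than bounding each quantity individually and hoping for a union over events---so that a single pigeonhole/averaging step suffices.
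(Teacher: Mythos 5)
Your proposal is correct and follows essentially the same route as the paper's proof: the uniform IID attacker on $\{e_1,e_2\}$, the $T/2$ expected reactive cost versus the $T/2+\tfrac{1}{2}\mathbb{E}|n_1-n_2| = T/2+\Omega(\sqrt{T})$ prescient proactive cost, and a probabilistic-method extraction. Your one refinement—applying the averaging argument to the \emph{difference} of the two costs so that a single realization witnesses both quantities simultaneously—is exactly the right way to make the paper's somewhat terse "by the probabilistic method" step airtight.
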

\begin{proof}
Consider the following random attack sequence: For each round, select an
attack path uniform IID from the set $\{e_1, e_2\}$. A reactive strategy
must commit to a defense in every round without knowledge of the attack, and
therefore every strategy that expends the entire budget of $1$ inflicts an
expected cost of $1/2$ in every round.  Thus, every reactive strategy
inflicts a total expected cost of (at most) $T/2$, where the expectation is
over the coin-tosses of the random attack process.

Given an attack sequence, however, there exists a proactive defense
allocation with better performance.  We can think of the proactive defender
being prescient as to which edge ($e_1$ or $e_2$) will be attacked most
frequently and allocating the entire defense budget to that edge.  It is
well-known (for instance via an analysis of a one-dimensional random walk)
that in such a random process, one of the edges will occur
$\Omega(\sqrt{T})$ more often than the other, in expectation.

By the probabilistic method, a property that is true in expectation must
hold existentially, and, therefore, for every reactive strategy $A$, there
\emph{exists} an attack sequence such that $A$  has a cost $x$, whereas the
best proactive strategy (in retrospect) has a cost $x+\Omega(\sqrt{T})$.
Because the payoff of each attack is $1$, the total reward in either case is
$T$.  The prescient proactive defender, therefore, has an ROA of $T/(x+
\Omega(\sqrt{T}))$, but the reactive algorithm has an ROA of $T/x$,
establishing the lemma.
\end{proof}

Given this lemma, we show that Algorithm~\ref{alg:mw-online-edges} is optimal given the
information available.  In this case, $n=2$ and, ignoring constants from
Theorem~\ref{thm:ROA-competitive-online}, we are trying to match a
convergence time $T$ is at most $(1+ \alpha^{-1})^2$, which is approximately
$\alpha^{-2}$ for small $\alpha$.  For large enough $T$, there exists a
constant $c$ such that $C \geq (T+ c\sqrt{T})/T$.  By easy algebra, $(T+
c\sqrt{T})/T \geq 1+ \alpha$ whenever $T \leq c^2/\alpha^2$, concluding the
argument.

We can generalize the above argument of optimality to $n > 2$ using the
combinatorial Lemma~3.2.1 from~\cite{expadv}.  Specifically, we can show
that for every $n$, there is an $n$ edge graph for which Algorithm~\ref{alg:mw-online-edges}
is optimal up to constants for small enough $\alpha$.

\end{document}